\documentclass[letterpaper,UKenglish,cleveref, autoref, thm-restate]{lipics-v2021}

\hideLIPIcs  


\bibliographystyle{plainurl}

\title{New Algorithm for Combinatorial n-folds and Applications}

\titlerunning{Combinatorial n-folds and Applications} 

\usepackage{marvosym}
\usepackage{todonotes}

\usepackage{csquotes}

\author{Klaus Jansen}{Kiel University, Department of Computer Science,  Germany}{kj@informatik.uni-kiel.de}{}{}
\author{Kai Kahler}{Kiel University, Department of Computer Science,  Germany}{kka@informatik.uni-kiel.de}{}{}
\author{Lis Pirotton}{Kiel University, Department of Computer Science,  Germany}{lpi@informatik.uni-kiel.de}{}{}
\author{Malte Tutas}{Kiel University, Department of Computer Science,  Germany}{mtu@informatik.uni-kiel.de}{}{}

\authorrunning{K. Jansen, K. Kahler, L. Pirotton and M. Tutas} 

\Copyright{K. Jansen, K. Kahler, L. Pirotton and M. Tutas} 

\ccsdesc{Theory of computation~Parameterized complexity and exact algorithms}
\ccsdesc{Theory of computation~Design and analysis of algorithms}

\keywords{integer linear programming, n-fold, parameterized complexity, scheduling, uniform machines} 

\category{} 

\relatedversion{} 

\funding{Funded by the Deutsche Forschungsgemeinschaft (DFG, German Research Foundation) – Project number 528381760"}

\acknowledgements{We thank Martin Koutecký for the helpful discussions regarding the topic of $P\|C_{\max}$.}

\EventEditors{John Q. Open and Joan R. Access}
\EventNoEds{2}
\EventLongTitle{42nd Conference on Very Important Topics (CVIT 2016)}
\EventShortTitle{CVIT 2016}
\EventAcronym{CVIT}
\EventYear{2016}
\EventDate{December 24--27, 2016}
\EventLocation{Little Whinging, United Kingdom}
\EventLogo{}
\SeriesVolume{42}
\ArticleNo{23}

\nolinenumbers
\usepackage{xspace}

\usepackage{algorithm}
\usepackage[noend]{algpseudocode}
\makeatletter
\AfterEndEnvironment{algorithm}{\let\@algcomment\relax}
\AtEndEnvironment{algorithm}{\kern2pt\hrule\relax\vskip3pt\@algcomment}
\let\@algcomment\relax
\newcommand\algcomment[1]{\def\@algcomment{\footnotesize#1}}

\renewcommand\fs@ruled{\def\@fs@cfont{\bfseries}\let\@fs@capt\floatc@ruled
  \def\@fs@pre{\hrule height.8pt depth0pt \kern2pt}%
  \def\@fs@post{}%
  \def\@fs@mid{\kern2pt\hrule\kern2pt}%
  \let\@fs@iftopcapt\iftrue}
\makeatother

\usepackage{tikz}
\usetikzlibrary{shapes.geometric, arrows.meta}
\usetikzlibrary{graphs}
\usetikzlibrary{matrix}
\usetikzlibrary{decorations.pathreplacing}
\tikzset{%
	>={Latex[width=2mm,length=2mm]},
	base/.style = {rectangle, rounded corners, draw=black,
		minimum width=4cm, minimum height=1cm,
		text centered, font=\sffamily},
	activityStarts/.style = {base, fill=blue!30},
	startstop/.style = {base, fill=red!30},
	activityRuns/.style = {base, fill=subbi!30},
	process/.style = {base, minimum width=2.5cm, fill=orange!15,
		font=\ttfamily},
}
\tikzset{toprule/.style={%
		execute at end cell={%
			\draw [line cap=rect,#1] (\tikzmatrixname-\the\pgfmatrixcurrentrow-\the\pgfmatrixcurrentcolumn.north west) -- (\tikzmatrixname-\the\pgfmatrixcurrentrow-\the\pgfmatrixcurrentcolumn.north east);%
		}
	},
	bottomrule/.style={%
		execute at end cell={%
			\draw [line cap=rect,#1] (\tikzmatrixname-\the\pgfmatrixcurrentrow-\the\pgfmatrixcurrentcolumn.south west) -- (\tikzmatrixname-\the\pgfmatrixcurrentrow-\the\pgfmatrixcurrentcolumn.south east);%
		}
	}
}
\usetikzlibrary{patterns}
\usetikzlibrary{decorations,arrows}
\usetikzlibrary{decorations.pathmorphing}
\usepgflibrary{decorations.pathreplacing} 
\usetikzlibrary{decorations.text}

\usepackage{nicefrac}
\usepackage{xfrac}

\usepackage{dsfont}

\newcommand*{\bdown}[1][]{b^{\downarrow #1}}
\newcommand*{\bup}[1][]{b^{\uparrow #1}}

\newcommand{\supp}{\textup{supp}}
\newcommand{\maxSupp}{K}
\newcommand{\maxSuppFormula}{\lfloor 2 \cdot (r + 1)\cdot \log(4 \cdot (r + 1) \cdot \Delta) \rfloor}

\newcommand{\txtotherwise}{\textup{otherwise}}

\newcommand{\calI}{\mathcal{I}}
\newcommand{\calIFormula}{
        \begin{cases}
            \log \big(\nicefrac{(\bdown_{\max} + \maxSupp)}{(2\maxSupp+1)} \big) +2, &\txtif \log \big(\nicefrac{(\bdown_{\max} + \maxSupp)}{(2\maxSupp+1)} \big) \textup{ is integer}\\
            \big\lceil\log \big(\nicefrac{(\bdown_{\max} + \maxSupp)}{(2\maxSupp+1)} \big)\big\rceil +1, &\txtotherwise.
        \end{cases}}
\newcommand{\calIkFormula}{
        \begin{cases}
            \log \big(\nicefrac{(\bdown_k + \maxSupp)}{(2\maxSupp+1)} \big) +2, &\txtif \log \big(\nicefrac{(\bdown_k + \maxSupp)}{(2\maxSupp+1)} \big) \textup{ is integer}\\
            \big\lceil\log \big(\nicefrac{(\bdown_k + \maxSupp)}{(2\maxSupp+1)} \big)\big\rceil +1, &\txtotherwise.
        \end{cases}}
\newcommand{\calO}{O}

\newcommand{\calA}{\mathcal{A}}
\newcommand{\calC}{\mathcal{C}}
\newcommand{\calN}{\mathcal{N}}

\newcommand{\pmaxD}{\pmax^\Od}
\newcommand{\pmaxOne}{\pmax^\Oone}
\newcommand{\Od}{{\calO(d)}}
\newcommand{\Or}{{\calO(r)}}
\newcommand{\Oone}{{\calO(1)}}
\newcommand{\trueCapital}{\texttt{true}}
\newcommand{\falseCapital}{\texttt{false}}

\newcommand{\polyI}{|I|^{\Oh(1)}}

\newcommand{\Z}{\mathbb{Z}}

\newcommand{\ZZ}{\mathbb{Z}}
\newcommand{\ZZgeqzero}{\mathbb{Z}_{\geq 0}}
\newcommand{\ZZgeqone}{\mathbb{Z}_{\geq 1}}
\newcommand{\ii}{(i)}
\newcommand{\powIteri}{{(i)}}

\newcommand{\powIterOne}{{(1)}}
\newcommand{\powIterTwo}{{(2)}}
\newcommand{\powIterCalI}{{(\calI)}}
\newcommand{\powIteriminOne}{{(i-1)}}
\newcommand{\powIteriplusOne}{{(i+1)}}

\newcommand{\Cmax}{C_{\max}}
\newcommand{\Cmin}{C_{\min}}

\newcommand{\QCmax}{Q\|C_{\max}}
\newcommand{\QCmin}{Q\|C_{\min}}

\newcommand{\tiln}[1][]{\tilde{b}^{\uparrow #1}}
\newcommand{\tilm}[1][]{\tilde{b}^{\downarrow #1}}
\newcommand{\tilx}{\tilde{x}}

\newcommand{\hatn}[1][]{\hat{b}^{\uparrow #1}}
\newcommand{\hatm}[1][]{\hat{b}^{\downarrow #1}}
\newcommand{\hatx}{\hat{x}}

\newcommand{\RTfeasibilityO}{(n r \Delta)^\Or \log(b_{\text{def}})}
\newcommand{\RTsched}{(d \pmax)^\Od \log(m_{\text{def}})}
\newcommand{\RTqcmaxO}{\pmax^\Od \polyI}

\newcommand{\closeststring}{\textsc{Closest String}\xspace}
\newcommand{\imbalance}{\textsc{Imbalance}\xspace}

\newcommand{\txtif}{\textup{if }}
\newcommand{\txtother}{\textup{otherwise }}

\newcommand{\pmax}{p_{\max}}

\newcommand{\Oh}{O}
\newcommand{\etal}{et al.\xspace}
\newcommand{\norm}[1]{\left\lVert#1\right\rVert}

\newcommand{\nfold}{$n$-fold\xspace}

\newcommand{\nullvec}{\mathds{O}}
\newcommand{\onevec}{\mathds{1}}

\begin{document}
\maketitle

\begin{abstract}
 Block-structured integer linear programs (ILPs) play an important role in various application fields. We address $n$-fold ILPs where the matrix $\mathcal{A}$ has a specific structure, i.e., where the blocks in the lower part of $\mathcal{A}$ consist only of the row vectors $(1,\dots,1)$.
In this paper, we propose an approach tailored to exactly these combinatorial $n$-folds. We utilize a divide and conquer approach to separate the original problem such that the right-hand side iteratively decreases in size. We show that this decrease in size can be calculated such that we only need to consider a bounded amount of possible right-hand sides. This, in turn, lets us efficiently combine solutions of the smaller right-hand sides to solve the original problem. We can decide the feasibility of, and also optimally solve, such problems in time $(n r \Delta)^{O(r)} \log(\|b\|_\infty),$ where $n$ is the number of blocks, $r$ the number of rows in the upper blocks and $\Delta=\|A\|_\infty$.

We complement the algorithm by discussing applications of the $n$-fold ILPs with the specific structure we require. We consider the problems of (i) scheduling on uniform machines, (ii) closest string and (iii) (graph) imbalance.
Regarding (i), our algorithm results in running times of $p_{\max}^{O(d)}|I|^{O(1)},$ matching a lower bound derived via ETH. 
For (ii) we achieve running times matching the current state-of-the-art in the general case. In contrast to the state-of-the-art, our result can leverage a bounded number of column-types to yield an improved running time. 
For (iii), we improve the parameter dependency on the size of the vertex cover.
\end{abstract}

\newpage
\section{Introduction}\label{sec:introduction}
At an abstract level, computer science is about solving hard problems. In a seminal work, Karp formulated a list of twenty-one NP-hard problems~\cite{Karp72}. Among these is the problem of integer linear programming.
An integer linear program (ILP) in standard form is defined by 
$$\max\{c^Tx \mid \calA x=b, x \in \ZZgeqzero\},$$
with constraint matrix $\calA \in \Z^{m\times n}$, right-hand side (RHS) $b \in \Z^m$ and objective function vector $c \in \Z^n.$
The constraints $Ax=b$ and the objective function $c^T x$ are required to be linear functions. 
Such integer programs have been the subject of vast amounts of research, such as~\cite{Papadimitriou81,Lenstra83,ReisR23,JR23}, with the latter two results representing the current state of the art of running times regarding $n$ and $m$ respectively. However, with integer programming being an NP-hard problem, research has flourished into different structures of integer programs, which utilize a more specific structure to yield improved running times. One such structure is the \nfold ILP.  

\paragraph*{\nfold ILP}
Let $r, s, t, n \in \ZZgeqzero$. An \nfold ILP is an integer linear program with RHS vector $b \in \ZZ^{r+ns}$ and with a constraint matrix of the form
$$
\begin{pmatrix}
    A_1     & A_2   & \dots  & A_n      \\
    B_1     & 0     & \dots  & 0        \\
    0       & B_2   &        & \vdots   \\
    \vdots  &       & \ddots & 0        \\
    0       & \dots & 0      & B_n
\end{pmatrix},
$$
where $A_1, \dots, A_n \in \ZZ^{r \times t}$ and $B_1, \dots, B_n \in \ZZ^{s \times t}$ are matrices, and $x = (x_1, \dots, x_n)$ is the desired solution. In other words, if we delete the first $r$ constraints, 

which are called the \emph{global} constraints, the problem decomposes into independent programs, 

which are called \emph{local}. If $\Delta$ is the largest absolute value in $A_i$ and $B_i$, \nfold integer programs can be solved in time $(nt)^{(1+o(1))} \cdot 2^{O(rs^2)}(rs\Delta)^{O(r^2s+s^2)}$~\cite{CEHRW21}. This algorithm comes as the latest development in a series of improvements spanning almost twenty years~\cite{AschenbrennerH07,CEHRW21,CslovjecsekEPVW21,DeLoera2008,EHK18,HemmeckeOR13,JansenLR20,KouteckyLO18}. \nfold integer programs have already found applications in countless problems, often yielding improved running times over other techniques. As the total number of applications is too large, we refer the reader to~\cite{ChenMYZ17,DeLoera2008,GavenciakKK22,JansenKMR22,KnopK18,KnopKLMO21,KnopKM20} for some examples. 

In the following, we focus on combinatorial \nfold ILPs where we allow blocks of different width, i.e., $t \in \ZZgeqzero^n$ is a vector and therefore, the block $i \in [n]$ has $t_i$ columns. Also, we have $B_i = (1,\dots,1)^T = \mathds{1}_{t_i}$ for all $i \in [n]$ which implies $s = 1$. Thus, we have the following ILP
\begin{align} \label{eq:nfold} 
    \calA x = \begin{pmatrix}
    A_1     & A_2   & \dots  & A_n      \\
    \mathds{1}_{t_1}& 0      & \dots  & 0        \\
    0       & \mathds{1}_{t_2}   &        & \vdots   \\
    \vdots  &       & \ddots & 0        \\
    0       & \dots & 0      & \mathds{1}_{t_n}
\end{pmatrix} x = b, \qquad
x \in \ZZgeqzero^h
\tag{$\spadesuit$}
\end{align}
with $A_i \in \ZZ^{r \times t_i}$ for all $i \in [n]$, $b \in \ZZ^{r+n}$ and where $h := \sum_{i \in [n]}t_i$ is the total number of columns in $\calA$. 
The currently best known algorithms for such combinatorial \nfold ILPs have parameter dependency $(r\Delta)^{O(r^2)}$~\cite{CEHRW21,KnopKM20}.
In our work, we often consider the upper part of the vector $b$ differently from its lower part. Therefore, we introduce the following notation: $b = (\bup, \bdown)^T$ with $\bup \in \ZZ^r$ and $\bdown \in \ZZ^n$. Also, define $\bup_{\max} := \max_{j\in[r]} \bup_j,$ $\bdown_{\max} := \max_{k\in[n]} \bdown_k$ and $b_{\text{def}} := \min(\bup_{\max},\bdown_{\max})$.

\paragraph*{Our contribution}
In this paper, we provide a novel approach to solving such combinatorial \nfold ILPs. While the idea of the algorithm we provide is inspired by Jansen and Rohwedder~\cite{JR23}, we note that we utilize the structure of \nfold ILPs to generate different techniques to split solutions apart into smaller sub-solutions. Using these techniques, we achieve the following result.

\begin{restatable}{theorem}{thmopt}\label{thm:opt}
    Combinatorial \nfold ILPs \eqref{eq:nfold} with respect to the objective of maximizing $c^Tx, c \in \ZZgeqzero^h$ and where $\Delta$ is the largest absolute value in $\calA$ can be solved in time $\RTfeasibilityO$.
\end{restatable}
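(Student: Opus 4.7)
The strategy is a divide-and-conquer hierarchy inspired by Jansen--Rohwedder~\cite{JR23}, where the right-hand side of (\ref{eq:nfold}) is iteratively halved. Let $K=\maxSuppFormula$ be the Eisenbrand--Weismantel-style support bound applicable to any feasible $x$ of (\ref{eq:nfold}), and let $\calI=\Oh(\log b_{\text{def}})$. I introduce levels $i=0,1,\ldots,\calI$ and, at each level, maintain a table $\calL^\powIteri$ of all realizable right-hand sides $b=(\bup,\bdown)$ attained by solutions $x$ whose lower entries satisfy $\bdown_k\leq (2K+1)\cdot 2^i$ for every block $k$. Feasibility of the input reduces to checking $b\in\calL^\powIterCalI$; for the optimization variant, each table entry is annotated with the best $c^Tx$ achieving it, and the maximum is propagated during combining.

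The base table $\calL^\powIterZero$ is built by brute-force enumeration: inside each block, $\bdown_k\leq 2K+1$ forces $x_k$ to be a multiset of at most $2K+1$ columns (with multiplicity), so per-block contributions to $\bup$ are listable in time $(t_k\Delta)^{\Oh(K)}$, and cross-block sums aggregate in $\ZZ^r$. The inductive step uses the identity $x=\lceil x/2\rceil+\lfloor x/2\rfloor$: any feasible solution at level $i$ decomposes into two feasible solutions at level $i-1$, and, conversely, every compatible sum $b'+b''$ with $b',b''\in\calL^\powIteriminOne$ lies in $\calL^\powIteri$. Hence $\calL^\powIteri$ is computed from $\calL^\powIteriminOne$ by enumerating pairs and bucketing by their value in $\ZZ^r$, subject to the scale restriction at level $i$.

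The main obstacle is to bound $|\calL^\powIteri|$ by $(nr\Delta)^{\Oh(r)}$ uniformly in $i$; granted this, the total running time is $\calI$ times $(nr\Delta)^{\Oh(r)}$ per level, i.e.\ $\RTfeasibilityO$. The bound should follow from a sparsity/proximity argument tailored to combinatorial $n$-folds: every achievable $\bup\in\ZZ^r$ is realized by a solution with an $\Oh(r)$-support column profile (by a Carath\'eodory-type argument on the integer cone of the $A_k$-columns), bounding the distinct upper parts by $(nr\Delta)^{\Oh(r)}$; the lower part $\bdown$ is controlled by the local $\mathds{1}$-constraints together with the scale cap at level $i$. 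The delicate point is that after combining, the solution $x'+x''$ implicit in an entry $b'+b''$ may fail to be naturally sparse, so an on-the-fly re-sparsification lemma is required to keep the table small as one moves up the hierarchy rather than squaring its size at each level. Establishing this re-sparsification---showing that any sum of two sparse-profile solutions can be rewritten as a single sparse-profile solution at the next scale---is the crux of the proof; once it is in hand, bucketing pairs by their $\ZZ^r$-value makes each combining step run in $(nr\Delta)^{\Oh(r)}$ time, and multiplying by $\calI=\Oh(\log b_{\text{def}})$ yields the claimed bound.
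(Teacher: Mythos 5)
Your proposal shares the broad divide-and-conquer shape with the paper, but it diverges at the two places that actually make the running time work, and as written it has a genuine gap that would make the approach exponential in $n$.

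First, you maintain a table $\calL^{(i)}$ over full right-hand sides $b=(\bup,\bdown)$, with the lower part allowed to range freely up to $(2K+1)\cdot 2^{i}$. But $\bdown$ has $n$ coordinates, so the number of admissible lower parts is of order $((2K+1)2^{i})^{n}$ --- exponential in $n$, not $(nr\Delta)^{O(r)}$. The paper avoids this blow-up entirely by showing the lower right-hand side is \emph{uniquely determined} at every iteration: instead of halving $x$ symmetrically as $\lceil x/2\rceil + \lfloor x/2\rfloor$, the paper subtracts a small piece $\tilde{x}^{(i)}$ whose per-brick $1$-norms are capped by $K$, chosen with a parity correction $z_k^{(i+1)}\in\{0,1\}$ so that the remainder $\hat{x}^{(i)}$ is entrywise even and can be cleanly divided by $2$. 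This forces $\bdown[(i)]_k$ to a single integer value computable from $\bdown_k$ alone (Lemmas~\ref{lem:m_k integer} and \ref{lem:m_k formula}), so the table only indexes the upper part $\bup\in\ZZ^r$. Your symmetric halving, inherited from Jansen--Rohwedder, does not exploit the $\mathds{1}$-row local constraints and therefore cannot pin down $\bdown$ at the sub-levels.

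Second, you hope to bound the number of distinct achievable $\bup$ values via a Carath\'eodory-type sparsity argument plus an ``on-the-fly re-sparsification'' lemma, which you acknowledge you do not have. That would indeed be the crux of your route, but it is also not what the paper does, and I do not see why it would hold: many different sparse column profiles yield different $\bup$ sums, so sparsity by itself does not cap the number of distinct upper parts. The paper instead proves a \emph{proximity} statement (Lemma~\ref{lem:properties norm n leq D}): whenever the original ILP is feasible, at level $i$ there is a feasible $\bup[(i)]$ with $\|\bup/2^{\calI-i}-\bup[(i)]\|_\infty\le D = n K\Delta$. Because of this, the algorithm only keeps table entries inside a box of side $2D$ centered at $\nicefrac{\bup}{2^{\calI-i}}$, which has $(2D+1)^r = (nr\Delta)^{O(r)}$ cells. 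Combining is then asymmetric -- double the previous level's table and add a freshly computed set $\tilde{N}^{(i)}$ from a small DP -- so there is no pair-enumeration step whose output needs re-sparsifying. To salvage your plan you would need to replace your missing re-sparsification lemma with a proximity bound of this kind and, more fundamentally, to determine $\bdown$ at each scale rather than enumerate it.
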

Comparatively, the algorithm in~\cite{JR23} yields a running time of 
$O(\sqrt{r+n}\Delta)^{2(r+n)} + O(h\cdot(r+n))$ when applied to our setting, while the state-of-the-art algorithm~\cite{CEHRW21} achieves a running time of $(n\|t\|_\infty)^{(1+o(1))} \cdot 2^{O(r)}(r\Delta)^{O(r^2)}$.

To complement the general techniques we provide to solve combinatorial \nfold ILPs, we show how our result can be applied to common problems. First, we consider scheduling on uniform machines with the objective of makespan minimization $(\QCmax)$ and Santa Claus $(\QCmin)$. 
After publication of a preliminary version of our results, Rohwedder~\cite{rohwedder2025} recently bounded $\Delta$ by $\pmaxOne$ for $\QCmax$, where $\pmax$ is the largest processing time.
They also provide an (almost) tight algorithm for $\QCmax$ with running time $(d\pmax)^{O(d)}(h+M)^{O(1)} |I|^{O(1)}$, where $d$ is the number of job-types, $h$ is the total number of columns in the configuration ILP and $M$ is the total number of machines. Replacing a subroutine in \cite{rohwedder2025} by our algorithm improves their result to $\RTsched |I|^{O(1)}$, where $m_{\text{def}}:=\min(n_{\max},m_{\max})$ i.e., we achieve a better result with respect to parameter $h$ and $M$. This answers the open question by Kouteck\'y and Zink in~\cite{KZ20}. They pose the question whether $\QCmax$ can be solved in time $\pmaxD \polyI$. 
In more detail, the calculation of the Multi-Choice IP in~\cite{rohwedder2025} has a running time of $(nr\Delta)^{O(r)}(h+\|\bdown\|_1)$, while our \nfold algorithm runs in $\RTfeasibilityO$, i.e., solves these Multi-Choice IPs more quickly.

We also manage to extend these techniques to $\QCmin.$ This yields running times that are (almost) tight by an ETH-lower bound.

\begin{restatable}{theorem}{schedthm}\label{thm:main2}
    $\QCmax$ and $\QCmin$ can be solved in time $\RTqcmaxO$.
\end{restatable}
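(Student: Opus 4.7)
The plan is to cast both problems as combinatorial $n$-fold ILPs of the form \eqref{eq:nfold} with $r = \Oh(d)$, $\Delta \le \pmax^{\Oh(1)}$, and $n$ bounded by a polynomial in $\pmax$, so that \cref{thm:opt} immediately yields the bound $\RTqcmaxO$. The $|I|^{\Oh(1)}$ overhead absorbs the preprocessing needed to compress machine-speed classes, and the $\log(b_{\text{def}})$ factor of \cref{thm:opt} replaces the polynomial dependence on the number of machines that plagues previous configuration-ILP based algorithms.

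For $\QCmax$, the starting point is the standard configuration ILP: one block per (compressed) speed class, the variables $x_{j,c}$ counting how many machines of class $j$ use configuration $c$, the local all-ones constraint $\sum_c x_{j,c}$ fixing the number of machines in class $j$, and the $\Oh(d)$ global constraints ensuring that all jobs of each type are served. This is exactly the form \eqref{eq:nfold}. Rohwedder~\cite{rohwedder2025} has shown that one may restrict attention to configurations whose largest entry is $\pmax^{\Oh(1)}$, yielding $\Delta \le \pmax^{\Oh(1)}$; the remaining bottleneck in their framework is a Multi-Choice IP subroutine with running time $(nr\Delta)^{\Oh(r)}(h+\|\bdown\|_1)$. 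Substituting this subroutine by \cref{thm:opt}, whose running time is $(nr\Delta)^{\Oh(r)}\log(b_{\text{def}})$, replaces the polynomial-in-$h$-and-$M$ factor by a logarithmic one, producing $\RTsched\cdot|I|^{\Oh(1)} = \RTqcmaxO$.

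For $\QCmin$ the strategy is analogous after a binary search over the target minimum load $T$: for each of the $\Oh(\log(|I|\pmax))$ candidate values, admit only configurations of load at least $T$ and solve the resulting feasibility instance via \cref{thm:opt}. The main obstacle I expect is re-establishing $\Delta \le \pmax^{\Oh(1)}$ in this setting, because the ``remove a job from an oversized configuration'' monotonicity used for makespan fails for Santa Claus: truncating a configuration may drop its load below $T$. I would handle this by a pairwise exchange argument within each speed class: whenever a configuration contains more than $\pmax^{\Oh(1)}$ copies of a common job type, swap with a sibling configuration of the same class so that both still achieve load at least $T$ while the maximum multiplicity strictly decreases. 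Iterating this exchange yields a solution with $\Delta \le \pmax^{\Oh(1)}$, after which \cref{thm:opt} delivers the claimed running time.
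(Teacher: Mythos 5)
Your $\QCmax$ half is essentially the paper's argument: plug the new combinatorial $n$-fold solver into Rohwedder's framework in place of the Multi-Choice IP subroutine, with $\Delta \le \pmaxOne$ coming from his small/big machine split, pivot job size, and parity-bounded configurations. That part is fine.

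The $\QCmin$ half has a genuine gap. You correctly identify that naive truncation fails, but the pairwise-exchange fix you propose does not actually bound the configuration entries. Consider two big machines of the same speed class, each needing load at least $T \gg \pmax^{\Oh(1)}$, and a job-type with unit processing time that dominates the instance. Both machines must carry $\Theta(T)$ copies of that job, and swapping copies between sibling configurations cannot lower the maximum multiplicity on both simultaneously --- the total count is conserved and there is nowhere to put the excess. More fundamentally, an exchange argument only permutes how jobs are distributed across configurations; it cannot remove the dependence of the configuration length on $T$.

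What the paper actually does for $\QCmin$ is structural, not local: (i) it introduces negative dummy jobs so the target loads become exact equalities and proves (via an exchange argument of a different flavor, \cref{lem:Cminload}) that each machine's load lies in $[\lceil OPT \cdot s_k\rceil, OPT \cdot s_k + \pmax]$; (ii) it guesses a pivot size $a$ and shows (\cref{lem:a-jobs}) that at least $\pmax^2|B|$ jobs of size $a$ live on big machines; (iii) big-machine configurations are then only required to match the guessed load \emph{modulo} $a$, with all entries below $a \le \pmax$, so $\Delta$ drops to $\pmaxOne$ \emph{by construction}, not by repair of a large solution; (iv) a dedicated slack block ensures leftover jobs are either of size $a$ or come in bundles of $a$, which a pigeonhole argument then places greedily without overshooting any $T_k$. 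The residue-class relaxation of the big-machine constraints is the idea your sketch is missing, and it cannot be recovered by any sequence of within-class swaps.
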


We also apply our algorithm to the \closeststring problem and the \imbalance problem (see \cref{sec:Applications} for formal definitions). 
The current best running times~\cite{KnopKM20} for both problems have a quadratic dependency of $k$ in the exponent, where $k$ is the number of considered strings (\closeststring), respectively the size of the vertex cover (\imbalance). When the number of column-types (\closeststring), respectively vertex-types (\imbalance) is bounded by $T$, our algorithm achieves the following results.
\begin{restatable}{corollary}{closestStringCor}\label{cor:closest string}
    The \closeststring problem can be solved in time $((T+1) k)^{O(k)} \log(L)$.
\end{restatable}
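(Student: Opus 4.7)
The plan is to express \closeststring as a single combinatorial \nfold ILP of the form~\eqref{eq:nfold} with $T+2$ blocks so that \cref{thm:opt} applies directly, avoiding the usual binary search over the target distance. First I would group the $L$ positions by column-type, declaring $j, j' \in [L]$ equivalent iff $s^{(i)}_j = s^{(i)}_{j'}$ for every input string $i \in [k]$. Let there be $T$ distinct types, with $n_\tau$ positions of type $\tau$, and write $\tau_i$ for the character of string $i$ at any type-$\tau$ position. For each type $\tau$ I would create a block with one variable $x_{\tau,c} \in \ZZgeqzero$ for every character $c$ that occurs in column $\tau$ of the input (at most $k$ choices, which suffices since an optimal \closeststring solution may always be taken to use only characters present in the input at each position). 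The local constraint $\sum_c x_{\tau,c} = n_\tau$ already has the $\mathds{1}$-row structure demanded by~\eqref{eq:nfold}.

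Next I would encode the min-max objective with two auxiliary blocks. Block $T+1$ holds slack variables $y_1, \dots, y_k$ together with a ``bank'' variable $z$, under the local constraint $\sum_i y_i + z = kL$. Block $T+2$ holds two variables $d$ and $\bar d$ with local constraint $d + \bar d = L$. The $k$ global constraints read
\[
    \sum_{\tau,\, c \,:\, c = \tau_i} x_{\tau,c} \;+\; d \;-\; y_i \;=\; L, \qquad i \in [k],
\]
so that $y_i \geq 0$ precisely enforces $d \geq L - \mathrm{matches}_i = \mathrm{HamDist}(s^{*}, s^{(i)})$. Choosing the objective vector $c \in \ZZgeqzero^h$ to be $1$ on $\bar d$ and $0$ elsewhere, maximising $c^T x$ maximises $\bar d = L - d$ and hence minimises $d$; since $c \geq 0$, the hypothesis of \cref{thm:opt} is met, so we recover the optimum distance in one call.

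A parameter count then gives $n = T+2$, $r = k$, $\Delta = 1$ (every nonzero entry of $\calA$ lies in $\{-1, +1\}$), and $b_{\text{def}} = \min(L, kL) = L$. Substituting into the bound $(n r \Delta)^{\Or}\log(b_{\text{def}})$ from \cref{thm:opt} yields the claimed running time $((T+1)k)^{\calO(k)}\log L$, since $(T+2)k = \Theta((T+1)k)$. The step needing the most care is verifying that the bank variable $z = kL - \sum_i y_i$ stays non-negative on intended feasible solutions; this is immediate, since each $y_i = d - \mathrm{mismatches}_i \leq d \leq L$, so $\sum_i y_i \leq kL$. The main obstacle is to find a formulation simultaneously achieving $\Delta = 1$, respecting the $\mathds{1}$-block structure, and avoiding binary search; the $d,\bar d$ complement trick in block $T+2$ is precisely what removes the binary-search factor that a naïve reduction would incur.
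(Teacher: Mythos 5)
Your proof is correct, and it reaches the claimed bound, but it takes a somewhat different route from the paper. The paper uses the Gramm~\etal ILP for the \emph{decision} version directly: $T$ type-blocks carrying mismatch counts, one inequality $\sum d_H(e,f_j)x_{f,e}\le d$ per string, and a single slack block $A_{T+1}=(I_k \;\nullvec)$ with local RHS $dk$ to turn the $k$ inequalities into equations, after which the feasibility algorithm is applied with $n=T+1$, $r=k$, $\Delta\le 1$. You instead build a single \emph{optimization} ILP in which the target distance $d$ is itself a variable: you count matches rather than mismatches, add a block of per-string slacks $y_i$ with a bank variable $z$, add a second block with $d$ and its complement $\bar d$, and maximize $\bar d$ via \cref{thm:opt}. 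Both formulations hit the same running time since $(T+2)k=\Theta((T+1)k)$ and $b_{\text{def}}\le L$. What your version buys is that the minimum feasible distance $d^*$ is recovered in one call rather than as a yes/no answer for the given $d$; what it costs is one extra block and the additional bookkeeping around $z\ge 0$ (which you verify correctly). One small misconception: the paper's \closeststring is stated as a decision problem with $d$ in the input, so there is no binary search being ``avoided'' -- the paper plugs $d$ into the ILP directly. Your optimization formulation is a genuine strengthening of what is needed, not a shortcut around a loop the paper would otherwise have to perform.
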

\begin{restatable}{corollary}{imbalanceProb}\label{cor:imbalance}
    The \imbalance problem with $n$ vertices can be solved in time $((T+2)k)^{O(k)} \log(kn)$, where $k$ is the size of a vertex cover and $T$ is the number of vertex-types.
\end{restatable}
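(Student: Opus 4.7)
I plan to cast \imbalance as a combinatorial \nfold ILP of the form~\eqref{eq:nfold} and invoke \cref{thm:opt}. Fix a vertex cover $C$ with $|C|=k$ and observe that every vertex outside $C$ is determined, for the purpose of computing imbalances, by its neighborhood in $C$; group the non-cover vertices into at most $T$ \emph{vertex-types}. As a first step, enumerate the $k!\leq k^{O(k)}$ linear orderings $\pi$ of $C$. For each $\pi$, the positions occupied by the cover vertices partition the $n=|V|$ positions into $k+1$ \emph{slots}, so the only remaining decisions are how many vertices of each type are placed in each slot.

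For a fixed $\pi$ the ILP I would set up has variables $x_{\tau,s}$ arranged into $T$ blocks of width $k+1$ (one block per vertex-type), whose local constraints $\sum_s x_{\tau,s}=n_\tau$ fix the count of each type. For every $v\in C$, the signed imbalance $\ell_v-r_v$ reads $C_v+L_v(x)$, where $C_v$ is a constant determined by $\pi$ and $L_v$ is a linear form in $x$ with coefficients in $\{-1,0,+1\}$. To handle $|C_v+L_v(x)|$ without enumerating signs, I would introduce two further blocks of width $k+1$, each holding non-negative variables $y_v^+$ (resp.\ $y_v^-$) for $v\in C$ padded with a dummy column $d^{\pm}$ whose only role is to satisfy the rigid local form $B_i = \mathds{1}_{t_i}$: the local constraints read $\sum_v y_v^{\pm}+d^{\pm}=M$ with $M=O(kn)$ an upper bound on the total imbalance. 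The $r=k$ global rows are the linearizations $y_v^+-y_v^--L_v(x)=C_v$, and the objective is maximization of $d^+ + d^-$, equivalent up to the constant $2M$ to minimizing $\sum_v(y_v^++y_v^-)$; a uniform shift on the $x_{\tau,s}$-coefficients absorbs the linear contribution of the independent-set imbalances while keeping $c\in\ZZgeqzero^h$.

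Since every entry of $\calA$ lies in $\{-1,0,+1\}$ we have $\Delta=O(1)$, the total number of blocks is $T+2$, the number of global rows is $r=k$, and every RHS entry is bounded by $M=O(kn)$, so $b_{\text{def}}=O(kn)$. Applying \cref{thm:opt} solves each ILP in time $(nr\Delta)^{O(r)}\log(b_{\text{def}})=((T+2)k)^{O(k)}\log(kn)$, and multiplying by the $k!$ enumerated orderings preserves this bound since $k^{O(k)}$ is absorbed into the base, yielding the claimed running time.

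The main obstacle is fitting the absolute-value objective into the rigid combinatorial \nfold form $B_i=\mathds{1}_{t_i}$, which forbids free global variables. Splitting each $|C_v+L_v(x)|$ into $y_v^++y_v^-$ placed in two dummy-padded auxiliary blocks is the natural work-around, but care is required to verify that (i) complementarity $y_v^+y_v^-=0$ holds automatically at any optimum of the minimization (so the $y$-sum really equals $|C_v+L_v(x)|$), (ii) the dummy columns sit with zero coefficient in the global rows so they do not spuriously perturb the imbalance equations, and (iii) the uniform shift keeping $c\geq 0$ inflates neither $\Delta$ nor $b_{\text{def}}$ beyond the stated bounds.
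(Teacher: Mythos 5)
Your proposal is correct and lands on the same running time, but you linearize the absolute-value objective by a different route than the paper. You introduce the standard split $y_v^+ - y_v^- = C_v + L_v(x)$, so the global system is $r=k$ equality rows, and the auxiliary variables live in two dummy-padded blocks $y^+$ and $y^-$. The paper instead encodes the imbalance bound for each cover vertex via \emph{two} inequalities per vertex (one for each sign of $e_i + L_v(x)$), giving $2k$ global inequalities; it keeps a single $y$-block of width $k+1$ (with a slack $y_{k+1}$ making the local $\onevec^T y$-constraint an equality) and then appends a separate slack block of width $2k+1$ to turn the $2k$ inequalities into equations, so $r=2k$. Both routes give $T+2$ blocks, $\Delta=O(1)$, $b_{\text{def}}=O(kn)$ and the $((T+2)k)^{O(k)}\log(kn)$ bound, because the constant factor in $r$ vanishes inside the $O(\cdot)$ exponent; your version halves $r$, which is a cosmetically cleaner formulation but buys nothing asymptotically. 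Your checkpoints (i)--(iii) are exactly what makes the $y^{\pm}$-split sound and they all hold (complementarity at an optimum, zero global coefficients on the dummies, and the shift affecting only $c$). The one step you leave implicit -- shifting the $x$-coefficients by $z_S^{\max}:=\max_i z_S^i$ so that minimizing the independent-set contribution $\sum_{S,i} z_S^i x_{S,i}$ becomes maximization with $c\in\ZZgeqzero^h$, valid because $\sum_i x_{S,i}$ is pinned by the local constraint -- is the same step the paper also glosses over when invoking \cref{thm:opt}.
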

Generally, $T$ is bounded by $k^k$ for \closeststring. In this case, we meet the state-of-the-art parameter dependency. For \imbalance, $T$ may take any integer value up to $2^k$. The resulting parameter dependency of $2^{k^2}$ improves on the state-of-the-art $k^{k^2}$ which is the application of~\cite{KnopKM20} to the problem.
Additionally, when $T$ is bounded by $k^{O(1)}$, we are able to reduce the dependency in the exponent to linear for both problems.
When applying Rohwedder's algorithm for the Multi-Choice IP~\cite[Theorem 2]{rohwedder2025} to those problems, the resulting running times are $((T+1)k)^{O(k)}O(T+dk)$, where $d$ is the upper bound on string distance (\closeststring) and 
$((T+2)k)^{O(k)}O(Tk+nk)$ (\imbalance).

We expect that our techniques can be extended to several other problems that can be formulated as a combinatorial \nfold ILP as well, such as the string problems given in~\cite{KnopKM20}.

\subsection{Related Work}
Integer linear programming is one of the twenty-one problems originally proven to be NP-hard by Karp in his seminal work~\cite{Karp72}. Since then, ILPs have been used to solve countless NP-hard problems, for examples see \mbox{e.g.\ \cite{GavenciakKK22}.} The currently best known running times can be separated by their parameterization. When considering a small number of constraints $m$, the best known running time is given by Jansen and Rohwedder~\cite{JR23}, and runs in time $O(\sqrt{m}\Delta)^{(1+o(1))m}+O(nm).$ For a small number of variables $n$, the current best known algorithm is given by Reis and Rothvoss~\cite{ReisR23}, with a running time of $(\log n)^{O(n)}\cdot |\mathcal{A}|^{O(1)}$.

With the general ILP being a hard problem, research into certain substructures that are more easily solvable began. For an overview of this process, see e.g.,~\cite{EHKKLO22,GavenciakKK22}. Of particular interest to this paper is the \nfold ILP, whose current best known algorithm is given by Cslovjecsek, Eisenbrand, Hunkenschröder, Rohwedder and Weismantel~\cite{CEHRW21}. Their algorithm runs in time $(nt)^{(1+o(1))} \cdot 2^{O(rs^2)}(rs\Delta)^{O(r^2s+s^2)}$ and is the latest result in a series of improvements that spans close to twenty years, see~\cite{CslovjecsekEPVW21,DeLoera2008,EHK18,HemmeckeOR13,JansenLR20,KouteckyLO18} for works complementing this line of research.

Scheduling is one of the most well-known operations research problems. We now specifically turn towards uniform machines. By formulating the problem as an \nfold ILP, Knop and Koutecký~\cite{KK18} showed that $\QCmax$ can be solved in time $\pmax^{\Oh(\pmax^2)}\polyI$ if the number of machines is encoded in unary. 
By solving a relaxation to schedule many of the jobs in advance and then applying a dynamic program, Koutecký and Zink~\cite{KZ20} improved this to $\pmax^{\Oh(d^2)}\polyI$, still with the number of machines $M$ encoded in unary. Leveraging the huge-\nfold machinery, Knop \etal~\cite{KKLMO23} get rid of the assumption on $M$ and obtain a $\pmax^{\Oh(d^2)}\polyI$-time algorithm. Kouteck\'y and Zink~\cite{KZ20} pose the open question whether the dependency $\pmax^{\Oh(d^2)}$ for the objective $\QCmax$ can be improved, as it has been for identical machines. In a recent result, Rohwedder~\cite{rohwedder2025} gives an algorithm for $\QCmax$ running in time $\pmax^{O(d)}\polyI.$ We note that our work has been compiled independently and prior to Rohwedders result, but we can leverage his novel ideas to reduce the value of $\Delta$ to guarantee a similar result. Chen, Jansen and Zhang show that, for a given makespan $T$, there is no algorithm solving $P||\Cmax$ in time $2^{T^{1-\delta}}\polyI$~\cite{ChenJZ18}. Jansen, Kahler and Zwanger extended these results to derive a lower bound of $\pmax^{O(d^{1-\delta})}\polyI$~\cite{JansenKZ25}. As $\QCmax$ is a generalization of $P||\Cmax$, these results extend to it.

Regarding the \closeststring problem with $k$ input strings of length $L$, Gramm \etal~\cite{GNR03} formulated the problem as an ILP of dimension $k^{O(k)}$. This led to an algorithm with running time $2^{2^{O(k\log (k))}} O(\log(L))$. Knop \etal~\cite{KnopKM20} improved this double-exponential dependency to the currently best known algorithm with running time $k^{O(k^2)} O(\log(L))$. Rohwedder and W\k{e}grzycki~\cite{RohwedderW25} showed that the \closeststring problem (with binary alphabet) cannot be solved in time $2^{O(k^{2-\varepsilon})}\text{poly}(L)$ with $\varepsilon > 0$ unless the ILP Hypothesis fails.
The ILP Hypothesis states that for every $\varepsilon>0$, there is no $2^{O(m^{2-\varepsilon})}\text{poly}(L)$-time algorithm for ILPs with $\Delta = O(1)$.

Imbalance is known to be NP-complete for various special graph classes~\cite{BiedlCGHW05,KaraKW07}. 
Fellows, Lokshtanov, Misra, Rosamond and Saurabh~\cite{FellowsLMRS08} showed that the problem is fixed-parameter tractable (FPT) parameterized by the size $k$ of the vertex cover. Their algorithm has a double exponential running time. 
There also exist FPT algorithms for this problem parameterized by other parameters, for instance imbalance~\cite{LokshtanovMS13}, neighborhood diversity~\cite{Bakken18} and the combined parameter of treewidth and maximum degree~\cite{LokshtanovMS13}.
Misra and Mittal~\cite{MisraM21} showed that \imbalance is in XP when parameterized by twin cover, and FPT when parameterized by the twin cover and the size of the largest clique outside the twin cover.

\section{Overview}
In this section, we give an overview of the paper and the techniques within. We begin by giving some concepts essential to the functionality of our algorithm. Then, we prove the feasibility version of \cref{thm:opt} under the assumption that $\calA$ contains only non-negative entries. At its core, we use a divide and conquer approach to separate the entire problem into smaller sub-problems, which we can then solve more quickly via dynamic programming. Then, we combine these small solutions such that they solve iteratively larger sub-problems, again requiring a novel structural result to show that such combinations are feasible. A core strength of this dynamic programming formulation and approach is that we only need to solve the dynamic program once for each recombination step. This procedure can be iterated until we have solved the entire \nfold ILP. After that, we show how we can reduce the general problem which also allows negative coefficients to the special case where $\calA$ is non-negative. After that, we prove \cref{thm:opt}. 

The algorithm we provide is complemented by a discussion on its applications to relevant problems. We begin this by elaborating on the problem of scheduling on uniform machines in \cref{sec:Applications}. Here, we can adapt some novel concepts to our techniques such that we achieve a running time that is (almost) tight according to a lower bound derived via the exponential time hypothesis. This yields \cref{thm:main2}. Finally, in \cref{sec:closest string} and \cref{sec:imbalance} we provide a brief outlook onto other relevant problems, the \closeststring problem which results in \cref{cor:closest string} and the \imbalance problem which results in \cref{cor:imbalance}.

Due to space constraints, some proofs are placed in the appendix. We denote the corresponding statements with (\Rightscissors).

\subsection{Preliminaries}
Before we define concepts essential to the algorithm and its functionality, we first specify some notation.
We denote $[k] := \{i \in\ZZgeqzero \mid 1 \le i \le k\}$ and use base 2 logarithms, i.e., we assume $\log(2) = 1.$
A vector that contains a certain value in all entries is stylized, e.g. the zero vector $0_d$ is written as $\nullvec_d.$
We omit the dimension $d$ if it is apparent from the context.
For vector $v,$ we refer to its maximum value $\|v\|_\infty$ with $v_{\max}.$
The support of a $d$-dimensional vector $v$ is the set of its indices with a non-zero entry. It is denoted by $\supp(v) = \{i \in [d] \mid v_i \\ \neq 0\}.$
We define a \textit{brick} $x^{(i)} \in \mathbb{Z}^{t_i}$ as the $i$-th block of the solution vector of \eqref{eq:nfold}, i.e., we have $x = \begin{pmatrix}
    x^{(1)} 
    x^{(2)} 
    \dots 
    x^{(n)}
    \end{pmatrix}^T,$
with $x^{(i)} \in \mathbb{Z}^{t_i}$ for all $i \in [n].$

\paragraph*{Support of a Solution}
By applying the result by Eisenbrand and Shmonin~\cite[Theorem 1 (ii)]{ES06} to the \nfold ILP, one can show the following support bound:

\begin{lemma}\label{lem:support}
    Consider the ILP \eqref{eq:nfold} and let $\Delta$ be the largest absolute value of any entry in the coefficient matrix $\calA$.
    Then there exists a solution $x= (x_1,\dots,x_n)^T$ to \eqref{eq:nfold} with 
    \begin{align*}
        |\supp(x_k)| \leq 2 (r+1) \log(4 (r+1) \Delta), \qquad \forall k\in[n].
    \end{align*}
\end{lemma}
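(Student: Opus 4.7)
The idea is to apply the Eisenbrand--Shmonin support bound \emph{separately to each brick}, exploiting the fact that the local block $\mathds{1}_{t_i}$ has only a single row. Start from an arbitrary feasible integer solution $x = (x^{(1)}, \dots, x^{(n)})$ to \eqref{eq:nfold} (such a solution exists by hypothesis; otherwise the lemma is vacuous). For each $k \in [n]$, record the contribution of the $k$-th brick to the global constraints, namely $y^{(k)} := A_k x^{(k)} \in \ZZ^r$. The plan is then to replace $x^{(k)}$ by a sparser brick $\tilde x^{(k)}$ with the same global contribution $y^{(k)}$ and the same local value $\bdown_k$.

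\textbf{Per-brick subproblem.} Fix $k$ and consider the auxiliary system
\begin{equation*}
    A_k z = y^{(k)}, \qquad \mathds{1}_{t_k}^T z = \bdown_k, \qquad z \in \ZZgeqzero^{t_k}.
\end{equation*}
This is a non-negative integer system with exactly $r+1$ rows and coefficient matrix of $\infty$-norm at most $\max(\Delta, 1) \le \Delta$ (assuming w.l.o.g.\ $\Delta \ge 1$, since otherwise $\calA = 0$ and the statement is trivial). By construction, $z = x^{(k)}$ is a feasible solution, so the system is consistent. I would then invoke Eisenbrand and Shmonin's theorem (Theorem~1(ii) of~\cite{ES06}) on this $(r+1)$-row system to obtain an integer solution $\tilde x^{(k)}$ of the same system with
\begin{equation*}
    |\supp(\tilde x^{(k)})| \;\le\; 2(r+1) \log\!\bigl(4(r+1)\Delta\bigr).
\end{equation*}

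\textbf{Reassembly.} Concatenating the resulting bricks yields $\tilde x = (\tilde x^{(1)}, \dots, \tilde x^{(n)})$. The global constraints are satisfied because $\sum_{k \in [n]} A_k \tilde x^{(k)} = \sum_{k \in [n]} y^{(k)} = \bup$, which is preserved brick by brick; the local constraints are satisfied by construction of each subproblem; and non-negativity is inherited. Hence $\tilde x$ is a feasible solution to \eqref{eq:nfold} meeting the claimed per-brick support bound.

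\textbf{Main obstacle.} There is no real obstacle beyond setting up the right auxiliary system: the crux is simply recognising that the per-brick view reduces the effective number of rows from $r+n$ to $r+1$, and that Eisenbrand--Shmonin applies to each such subproblem independently because the blocks of the constraint matrix do not overlap in variables. The only mild care needed is the edge case $\Delta = 0$ (handled trivially) and the standard convention $\log 2 = 1$ so that the numerical bound $2(r+1)\log(4(r+1)\Delta)$ is meaningful.
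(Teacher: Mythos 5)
Your proof is correct and follows essentially the same route as the paper: both fix the per-brick contribution $A_k x^{(k)}$ to the global constraints, reduce each brick to an independent $(r+1)$-row ILP, and apply the Eisenbrand--Shmonin support bound to each such system before reassembling. The only difference is your explicit handling of the $\Delta = 0$ edge case, which the paper leaves implicit.
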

\begin{proof}
    Let $y= (y_1,\dots,y_n)^T$ be a solution to \eqref{eq:nfold}. Then for each $k\in[n]$, the ILP
    \begin{align*}
        \begin{pmatrix}
            A_k \\
            \onevec^T
        \end{pmatrix}
        x_k &=
        \begin{pmatrix}
            A_ky_k \\
            b_{r+k}
        \end{pmatrix} \quad
        x_k\in\ZZgeqzero^c
    \end{align*}
    is feasible and we can apply the support bound by Eisenbrand and Shmonin~\cite[Theorem 1 (ii)]{ES06}. Hence, there exists a solution $x_k$ with $|\supp(x_k)| \leq 2 (r+1) \log(4 (r+1) \Delta).$ Combining these solutions for all $k\in[n]$ yields a solution $x$ with the desired property.
\end{proof}

Note that this bound only holds for the feasibility problem. Using the support bound by Eisenbrand and Shmonin~\cite[Corollary 5 (ii)]{ES06}, the same technique achieves the following result $|\supp(x_k)| \leq 2 (r+2) (\log(r+2) + \Delta+2)$. Other bounds, for instance~\cite{BJMS23}, can also be applied in our approach and may achieve better results in specific application settings.

\section[Combinatorial n-fold Algorithm]{Combinatorial \nfold Algorithm}\label{sec:algorithm}

\begin{figure}
    \centering
\begin{tikzpicture}[scale=0.75]

\filldraw (0.75,0.75) circle (2pt) node[anchor=south]{$0$};
\filldraw (3.0625,1.875) circle (2pt) node[anchor=south]{$\nicefrac{\bup}{4}$};
\filldraw (5.375,3) circle (2pt) node[anchor=south]{$\nicefrac{\bup}{2}$};
\filldraw (10,5.25) circle (2pt) node[anchor=south]{$\bup$};

\draw (0,0) rectangle ++(1.5,1.5);
\draw (2.3125,1.125) rectangle ++(1.5,1.5);
\draw (4.625,2.25) rectangle ++(1.5,1.5);
\draw (9.25,4.5) rectangle ++(1.5,1.5);

\draw[->,blue, thick] (0.75,0.75) -- (1.4,1.075);
\draw[->,red,dashed,thick] (1.4,1.075) -- (2.05,1.4);
\draw[->,blue,thick] (2.05,1.4) -- (2.75,1.45);
\draw[->,red,dashed,thick] (2.75,1.45) -- (4.75,2.15);
\draw[->,blue,thick] (4.75,2.15) -- (5.1,2.75);
\draw[->,red,dashed,thick] (5.1,2.75) -- (9.45,4.75);
\draw[->,blue,thick] (9.45,4.75) -- (10,5.25);
\end{tikzpicture}
   \caption{Example of the procedure with four iterations. The continuous (blue) arrows indicate a feasible upper RHS of the sub-problem in the current iteration. The dashed (red) arrows double the intermediate solution of the previous iteration.}
    \label{fig:boxes}
\end{figure}

On a high level, our algorithm implements an elegant idea. Instead of solving the -- often times large -- entire \nfold ILP in one fell swoop, we manage to reduce it to a set of smaller problems, which we then use to solve the large problem. More specifically, we show that there exists a feasible solution to \eqref{eq:nfold} that can be constructed by doubling a solution {\em near} $\nicefrac{\bup}{2}$ and adding the solution of a small sub-problem. This concept can be repeated iteratively, i.e., a solution near $\nicefrac{\bup}{2}$ can be computed by doubling a solution near $\nicefrac{\bup}{4}$ and adding the solution of another small sub-problem, and so on. We repeat this until the remaining problem only contains a small upper RHS $\bup$. For an illustration of this process, see \cref{fig:boxes}. A core strength of this approach is that we only need to compute this dynamic program once for every iteration, independent of the amount of candidate solutions. 

Taking a closer look, we use insights generated from the structure of our repeatedly doubled intermediary solutions to uniquely determine the lower RHS $\bdown$ at every step of the algorithm. As we double solutions in each step, we can calculate the total number of iterations our algorithm needs. We denote this number by $\calI \in \ZZgeqone$.
Later in this section, we give the exact value of $\calI$ and provide a proof that our algorithm needs exactly $\calI$ iterations.

For any iteration $i\in [\calI]$, we denote the lower RHS by $\bdown[\ii]$. In the first iteration, and every time we solve a small sub-problem, we know by \cref{lem:support} that the support of the individual bricks is bounded by
\begin{align} \label{eq:maxSupp}
    \maxSupp := \maxSuppFormula,
\end{align} 
which also limits the upper RHS, which we refer to as $D$, of these sub-problems. Later, we show how this bound $D$ is determined. The more detailed process is illustrated in \cref{fig:oneIter}. Here, the continuous (blue) arrow contains the solution to the small sub-problem, while the dashed (red) arrow represents the doubling of an earlier solution.

We begin the description of our algorithm by showing that  the lower RHS in each of these iterations is uniquely determinable. We further show that we can indeed find solutions close to each $\nicefrac{\bup}{2^{\calI-i}}$ that can be doubled to produce a feasible solution. Finally, we show how to utilize these insights to construct a feasible schedule while only needing to iteratively solve small sub-problems.

\begin{figure}
    \centering
\begin{tikzpicture}[scale=0.75]
\coordinate (n) at (0,0);
\filldraw[gray] (1,1) circle (2pt) node[anchor=south]{$\nicefrac{\bup}{2^{\calI-(i-1)}}$};
\filldraw[gray] (6,3) circle (2pt) node[anchor=south]{$\nicefrac{\bup}{2^{\calI-i}}$};

\draw[gray] (n) rectangle ++(2,2);

\coordinate (2n) at (5,2);

\draw[gray] (2n) rectangle ++(2,2);

\filldraw (0.5,0.2) circle (1pt);
\filldraw (4.4,2.1) circle (1pt);
\filldraw (5.6,2.3) circle (1pt);

\draw[decorate,decoration={brace,amplitude=10pt, mirror, raise=4pt}] (7,3) -- (7,4) node[midway, right=12pt] {$D$};
\draw[decorate,decoration={brace,amplitude=10pt, raise=4pt}] (0,1) -- (0,2) node[midway, left=12pt] {$D$};
\draw[decorate,decoration={brace,amplitude=10pt,mirror, raise=4pt}] (4.4,1.8) -- (5.6,1.8) node[midway, below=12pt] {$\le D$};

\draw[->, red, dashed] (0.5,0.2) -- (4.4,2.1) 
node [pos=0,below,black] {$\bup[(i-1)]$}
node [below,black] {$\hatn[\ii]$};
\draw[->,blue] (4.4,2.1) -- (5.6,2.3) 
node [pos=0.5,above] {$\tiln[\ii]$}
node [below,black] {$\bup[\ii]$};
\end{tikzpicture}

\caption{Zoom in on a single iteration $i$ including an illustration of the designation of the upper RHS $\bup[(i-1)], \hatn[\ii], \tiln[\ii]$ and $\bup[\ii]$ and the upper bound $D$ of the box sizes and $\|\tiln[\ii]\|_\infty$.}
\label{fig:oneIter}
\end{figure}

\subsection{Structure of Intermediary Solutions}\label{sec:topDown}
In this section, we show that if \eqref{eq:nfold} is feasible, there exists a feasible upper RHS near $\nicefrac{\bup}{2^{\calI-i}}$, for all $i \in [\calI]$, which then can be extended to solve main \nfold. 
We say that a RHS $b$ or upper RHS $\bup$ is feasible if and only if there exists a solution $x$ to $\calA x = (\bup,\bdown)^T$.
We bound the size of the RHS in the small sub-problems needed to be solved in order to extend these intermediary solutions in each step.

Within this section, we view the problem top-down, i.e., in each iteration $i \in [\calI],$ we divide the problem into two sub-problems.
The first sub-problem, which we call \emph{parity constrained}, contains only even entries in its lower RHS $\hatm.$ Because of this, that sub-problem can then be divided in two equal parts which we solve inductively. We call the second sub-problem, with lower RHS $\tilm$, \emph{small}. The \emph{small} sub-problem fulfills $\|\tilm\|_\infty \le K$ and can therefore be solved efficiently using dynamic programming.

For a given iteration $i$, we denote the RHS of the considered problem by $b^\powIteri.$ Note that we have $b^\powIterCalI=b.$ The division into the two sub-problems implies $b^\powIteri =\hat{b}^\powIteri + \tilde{b}^\powIteri.$ As we divide parity constrained problem into two equal problems and then repeat the procedure, we have $b^\powIteriminOne = \nicefrac{\hat{b}^\powIteri}{2}$. We ensure, that the upper RHS $\bup[\ii]$ also contains only even entries.

With this notation in mind, we begin by determining the lower RHS values for each iteration and their sub-problems.
To determine the value of the $k$-th entry in the lower RHS of the small sub-problems $\tilm[\ii]$, we consider the corresponding entry $\bdown[\ii]_k$. If that entry is small, that is if $\bdown[\ii]_k \le K$, we set $\tilm[\ii]_k = \bdown[\ii]_k$. This implies $\hatm[\ii] = \bdown[\ii]_k - \tilm[\ii]_k = 0$ and $\bdown[(i-1)]_k = \nicefrac{\hatm[\ii]_k}{2} = 0$. Therefore, for all following iterations this entry remains 0, i.e,\ $\bdown[(i')]_k = \tilm[(i')]_k = 0$ and $\hatm[(i')]_k = 0$ for all $i' < i$.
However, if $\bdown[\ii]_k$ is large, that is if $\bdown[\ii]_k > K$, we set $\tilm[\ii]_k$ to $K$ or $K-1$ such that $\hatm[\ii]_k = \bdown[\ii]_k-\tilm[\ii]_k$ is even. Thus, the parity constrained entry is larger than 0 and can be divided by 2. Now, set $\bdown[(i-1)]_k = \nicefrac{\hatm[\ii]_k}{2}$ and repeat the procedure for the next smaller iteration. 

More formally, for known value of $\bdown[(i+1)]_k > K$ we can derive $\bdown[\ii]_k$ as follows:
\begin{align}\label{eq:m_k iter}
    \bdown[\ii]_k &= \frac{\bdown[(i+1)]_k - (\maxSupp+z_k^\powIteriplusOne)}{2} \qquad \text{with }\\
    z_k^\powIteriplusOne &= \begin{cases} 
    0, & \text{if } \big(\bdown[(i+1)]_k \mod 2 \big) = (\maxSupp \mod 2) \\
    1, & \text{otherwise}. 
    \end{cases}\nonumber
\end{align}

The following lemmas derive that, despite of the uncertainty due to the parity-constraint, the lower RHS in each iteration can be uniquely determined from the input $(\calA,b)$ to \eqref{eq:nfold}.

\begin{restatable}[\Rightscissors]{lemma}{mkInt}\label{lem:m_k integer}
   Let $\bdown_k \in \ZZgeqzero^n$ and $k\in[n]$. Then $\bdown[\ii]_k$ is integer for all $i \in [\calI],$ where values are determined by the procedure iteratively applying \eqref{eq:m_k iter}.
\end{restatable}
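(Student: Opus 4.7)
The plan is to proceed by backward induction on the iteration index $i$, starting from $i=\calI$ and descending to $i=1$. The base case is immediate: $\bdown[(\calI)] = \bdown$ is part of the integer input to \eqref{eq:nfold}, so each entry $\bdown[(\calI)]_k$ is an integer (and non-negative). Assume inductively that $\bdown[(i+1)]_k$ is an integer. I then need to show that $\bdown[\ii]_k$, as defined by the construction preceding \eqref{eq:m_k iter}, is an integer as well.

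The construction distinguishes two regimes. If $\bdown[(i+1)]_k \leq \maxSupp$, then the procedure sets $\tilm[(i+1)]_k = \bdown[(i+1)]_k$ so that $\hatm[(i+1)]_k = 0$ and consequently $\bdown[\ii]_k = \nicefrac{\hatm[(i+1)]_k}{2} = 0$, which is trivially an integer. If $\bdown[(i+1)]_k > \maxSupp$, then the value is given by the recurrence \eqref{eq:m_k iter}, and the only thing to verify is that the numerator $\bdown[(i+1)]_k - (\maxSupp + z_k^{(i+1)})$ is divisible by $2$.

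This divisibility is exactly what the definition of $z_k^{(i+1)}$ enforces, and checking it is a one-line parity argument: if $\bdown[(i+1)]_k$ and $\maxSupp$ have the same parity, then $z_k^{(i+1)} = 0$ and their difference is even; otherwise, $\bdown[(i+1)]_k - \maxSupp$ is odd, and subtracting the additional $z_k^{(i+1)} = 1$ restores evenness. Thus, the numerator is divisible by $2$ and, combined with the induction hypothesis that $\bdown[(i+1)]_k$ is an integer, $\bdown[\ii]_k$ is an integer as well. Iterating the induction step down to $i=1$ yields the claim for every $i \in [\calI]$.

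There is no substantive obstacle here; the argument is essentially bookkeeping about how $z_k^{(i+1)}$ is defined, and the only subtlety is the case split based on whether $\bdown[(i+1)]_k$ has already dropped to at most $\maxSupp$ (in which case the recurrence produces $0$ from that point on) versus still exceeding $\maxSupp$ (where the parity-correction guarantees integrality of the halved value).
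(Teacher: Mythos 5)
Your proposal is correct and follows essentially the same approach as the paper's proof: backward induction from $i=\calI$, the trivial base case, the case split on $\bdown[(i+1)]_k \leq \maxSupp$ versus $\bdown[(i+1)]_k > \maxSupp$, and the parity argument showing $z_k^{(i+1)}$ forces the numerator to be even. No meaningful difference.
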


Using this property and \eqref{eq:m_k iter}, we achieve the desired result by elegantly formulating the term as a continued fraction and using a property of the geometric sum. 
\begin{restatable}[\Rightscissors]{lemma}{mkFormula}\label{lem:m_k formula}
    Let $\bdown_k \in \ZZgeqzero^n$ and $k\in[n]$.
    Then $\bdown[\ii]_k, i \in [\calI]$ is determinable as
    \begin{align} \label{eq:m_k^(i)}
        \bdown[\ii]_k = \begin{cases}
            0, & \text{if } i \neq \calI \text{ and } \bdown[(i+1)]_k \leq \maxSupp\\
            \Big\lfloor\nicefrac{\bdown_k}{2^{\calI-i}} - \maxSupp \cdot \sum_{\ell=i}^{\calI-1} \big(\nicefrac{1}{2^{\calI-\ell}}\big)\Big\rfloor, & \text{otherwise.}
        \end{cases}
    \end{align}
\end{restatable}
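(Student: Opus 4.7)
The plan is to split the argument according to the two cases in the claimed formula. The first case, $i \ne \calI$ with $\bdown[(i+1)]_k \le \maxSupp$, follows directly from the construction preceding \eqref{eq:m_k iter}: there the procedure sets $\tilm[(i+1)]_k = \bdown[(i+1)]_k$, so $\hatm[(i+1)]_k = 0$ and hence $\bdown[\ii]_k = \hatm[(i+1)]_k/2 = 0$, matching the formula. For the ``otherwise'' branch one either has $i = \calI$, where the sum in the formula is empty and the floor trivially returns $\bdown_k$, or one has $\bdown[(i+1)]_k > \maxSupp$, in which case the recurrence \eqref{eq:m_k iter} genuinely applies and must be unfolded.

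For the unfolding, I first observe that $\bdown[(i+1)]_k > \maxSupp$ actually forces $\bdown[(j)]_k > \maxSupp$ for every $j \in \{i+1, \dots, \calI\}$: if some intermediate $\bdown[(j)]_k$ were $\le \maxSupp$, then the zero branch would collapse $\bdown[(j-1)]_k$ to $0$, which by a straightforward downward induction would propagate zeros all the way to $\bdown[(i+1)]_k$, contradicting $\bdown[(i+1)]_k > \maxSupp$. This guarantees that \eqref{eq:m_k iter}, rewritten as $\bdown[(j)]_k = 2\bdown[(j-1)]_k + \maxSupp + z_k^{(j)}$, applies at every level. Substituting repeatedly from $j=i+1$ up to $j=\calI$ yields the continued-fraction expansion
\begin{align*}
    \bdown[\ii]_k = \frac{\bdown_k}{2^{\calI-i}} - \maxSupp \sum_{\ell=1}^{\calI-i} \frac{1}{2^\ell} - \sum_{\ell=1}^{\calI-i} \frac{z_k^{(i+\ell)}}{2^\ell},
\end{align*}
and the reindexing $\ell \mapsto \calI - \ell$ turns the middle geometric sum into $\maxSupp \sum_{\ell=i}^{\calI-1} 2^{-(\calI-\ell)}$, exactly as in the statement.

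To conclude, I absorb the rightmost sum into a floor. Since each $z_k^{(i+\ell)} \in \{0,1\}$, the geometric-series bound gives $0 \le \sum_{\ell=1}^{\calI-i} z_k^{(i+\ell)}/2^\ell \le 1 - 2^{-(\calI-i)} < 1$, and \cref{lem:m_k integer} ensures $\bdown[\ii]_k \in \ZZ$. Combining integrality with a sub-unit perturbation, $\bdown[\ii]_k$ equals the floor of the first two terms, which is exactly the claim. I expect the main obstacle to be the unfolding step itself: one must track the parity-correction terms $z_k^{(j)}$ carefully through the chain of halvings and, more subtly, use the monotonicity observation above to justify that the recursive branch of \eqref{eq:m_k iter} (and not the zero branch) fires at every level. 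Once the continued-fraction identity is established, the geometric-sum reindexing and the floor argument are routine.
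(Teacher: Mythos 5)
Your proof is correct and follows the same route as the paper's: both unfold the recurrence \eqref{eq:m_k iter} into a continued fraction, reindex the resulting geometric sum, and combine the integrality from \cref{lem:m_k integer} with the observation that the total $z$-contribution lies in $[0,1)$ to pin $\bdown[\ii]_k$ down as the stated floor (the paper phrases this as a length-less-than-one interval between the all-$z{=}0$ and all-$z{=}1$ extremes, but it is the same calculation). Your explicit monotonicity argument---that $\bdown[(i+1)]_k > \maxSupp$ forces $\bdown[(j)]_k > \maxSupp$ for every $j \in \{i+1,\dots,\calI\}$, so the non-zero branch of the recurrence genuinely fires at each level of the unfolding---is a small but real tightening of a point the paper's continued-fraction expansion leaves implicit.
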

\begin{proof}[Proof Idea.]
We determine the interval in which $\bdown[\ii]_k$ lies. 
The interval size is less than one. With \cref{lem:m_k integer}, this implies the lemma.
\end{proof}

We now focus on the case, when $\bdown[(i+1)]_k$ is large.
Since we reduce the lower RHS in each entry by at most $\maxSupp$ such that the remaining entries are even, the remaining lower RHS can be divided by two.
However, this does not directly imply that we can split the complete ILP into two equal sub-ILPs, as the upper RHS might have odd entries.
To provide that property, we give a procedure on how to reduce the solution of the \nfold within the iterations.
We show that after the reduction, there exists a feasible solution vector of the remaining problem, which contains only even entries, which then implies that the complete RHS is even.
Again, we make use of the support theorem (\cref{lem:support}) of Eisenbrand and Shmonin~\cite[Theorem 1 (ii)]{ES06}.
Define 
\begin{align} \label{eq:define tilde xi}
    \tilx_k^\powIteri := 
    \begin{cases}
    	x_k^\powIteri, &\text{if } \big\|x_k^\powIteri \big\|_1 \leq \maxSupp\\
    	\Big(\tilx_{k,1}^\powIteri, \dots,\tilx_{k,t_k}^\powIteri\Big)^T, &\text{otherwise}.
    \end{cases}
\end{align}
In the latter case, determine $\Big(\tilx_{k,1}^\powIteri, \dots,\tilx_{k,t_k}^\powIteri\Big)^T$ as follows (for details we refer to \cref{alg:get tilde x}).
\begin{enumerate}
    \item Set $\tilx_k^\powIteri = \nullvec_{t_i}$.
    \item For all odd entries $\ell \in [t_k]$ in $x_{k,\ell}^\powIteri$, add 1 to $\tilx_{k,\ell}^\powIteri$. This ensures the correct parity.
    \item Until $\|\tilx_k^\powIteri\|_1$ equals the desired value $K$ or $K-1$, add 2 to an arbitrary component $\tilx_{k,\ell}^\powIteri$, with $x_{k,\ell}^\powIteri - \tilx_{k,\ell}^\powIteri \ge 0.$ Note that there always exists such entry as otherwise, we would have $\bdown[\ii]_k \le K$ and the RHS of the remaining problem would be $\nullvec$.
\end{enumerate}
This procedure ensures that $x^\powIteri - \tilx^\powIteri =: \hatx^\powIteri$ only consists of even components. Therefore, the RHS $\hat{b}^\powIteri$ of such ILP $\calA \hatx^\powIteri = \hat{b}^\powIteri$ does not have any odd entries. Therefore, the complete problem can be split into two equal sub-problems.

Using the previously described reduction strategies, we can now deduce how the upper RHS $\bup[\ii]$ changes during an iteration. 
Since we do not know the solution $x$ of the \nfold \eqref{eq:nfold}, the intermediate solutions $x^\powIteri$ are also unknown. 
Note that \cref{alg:get tilde x} only gives a procedure to derive the current intermediate solution, assuming the prior one is known.
Therefore, we cannot uniquely determine the upper RHS of the sub-problems.
In the following, we often refer to a feasible RHS $b$ as a feasible upper RHS or point $\bup$ when $\bdown$ is apparent from the context.
Recall that it is enough to determine the feasible points near $\nicefrac{\bup}{2^{\calI-i}}$ for all $i \in [\calI].$ In the following, we determine the size of the boxes in which a feasible point must lie if \eqref{eq:nfold} is feasible.
For this purpose set $D := n \cdot \maxSupp \cdot \Delta.$

The next lemma states that for given small lower RHS $\tilm[\ii]$ the size of the upper RHS $\tiln[\ii]$ is bounded by $D$ in each dimension (\cref{fig:oneIter}).

\begin{restatable}[\Rightscissors]{lemma}{lengthJobs}\label{lem:jobbound}
    Let $\tilde{b}^\powIteri = \big(\tiln[\ii],\tilm[\ii]\big)^T$ with $i \in [\calI], \|\tilm[\ii]\|_\infty \leq \maxSupp$
    and assume $\calA \tilx^\powIteri = \tilde{b}^\powIteri$ is feasible.
    Then it holds that $\big\|\tiln[\ii]\big\|_\infty \leq D.$
\end{restatable}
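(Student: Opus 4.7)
The plan is to exploit the very special structure of combinatorial \nfold ILPs directly. The lower blocks of $\calA$ are row vectors of ones, so each local constraint reads $\onevec^T \tilx_k^\powIteri = \tilm[\ii]_k$. Since $\tilx_k^\powIteri \in \ZZgeqzero^{t_k}$, this forces $\|\tilx_k^\powIteri\|_1 = \tilm[\ii]_k$. Combined with the hypothesis $\|\tilm[\ii]\|_\infty \le \maxSupp$, I therefore obtain the uniform $\ell_1$-bound $\|\tilx_k^\powIteri\|_1 \le \maxSupp$ for every brick $k \in [n]$, entirely from feasibility and nonnegativity, without needing to invoke the support bound that defined $\maxSupp$ in the first place.

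Next I would turn to the global part of the system, which reads $\tiln[\ii] = \sum_{k=1}^{n} A_k \tilx_k^\powIteri$. Fixing a row index $j \in [r]$, denote by $(A_k)_{j,\cdot}$ the $j$-th row of $A_k$, whose entries are bounded in absolute value by $\Delta = \|\calA\|_\infty$. A standard $\|\cdot\|_\infty$-versus-$\|\cdot\|_1$ Hölder estimate combined with the triangle inequality then yields
\[
\bigl|\tiln[\ii]_j\bigr| \;\le\; \sum_{k=1}^{n} \bigl|(A_k)_{j,\cdot}\, \tilx_k^\powIteri\bigr| \;\le\; \sum_{k=1}^{n} \Delta \cdot \|\tilx_k^\powIteri\|_1 \;\le\; n \cdot \Delta \cdot \maxSupp \;=\; D.
\]
Taking the maximum over $j \in [r]$ gives $\|\tiln[\ii]\|_\infty \le D$, as claimed.

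The argument is a short direct calculation, and I do not foresee any genuine obstacle. The only conceptual point worth emphasising is that the $\onevec^T$-rows in the lower part of $\calA$ convert the componentwise bound on $\tilm[\ii]$ into an $\ell_1$-bound on every brick $\tilx_k^\powIteri$; once that translation is made, the bound on the upper right-hand side reduces to a routine norm inequality summed across the $n$ blocks. No sparsity, support, or Eisenbrand--Shmonin-type argument is needed at this step; those will come into play when the algorithm actually needs to \emph{compute} the brick-wise solutions in the small sub-problems.
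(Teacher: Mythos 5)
Your proof is correct and follows essentially the same route as the paper's: split $\calA\tilx^\powIteri = \tilde b^\powIteri$ into the upper and lower parts, bound $\|\tilx_k^\powIteri\|_1 \le \maxSupp$ per brick, and then apply the triangle inequality across blocks to the upper part, getting $\|\tiln[\ii]\|_\infty \le n\cdot\maxSupp\cdot\Delta = D$. The one genuine improvement in your version is how you obtain the per-brick $\ell_1$-bound: you read it off directly from the local constraint $\onevec^T\tilx_k^\powIteri = \tilm[\ii]_k$ together with nonnegativity and the hypothesis $\|\tilm[\ii]\|_\infty \le \maxSupp$, which makes the lemma self-contained. The paper instead justifies $\|\tilx_k^\powIteri\|_1 \le \maxSupp$ by appealing to the algorithmic construction of $\tilx^\powIteri$ in \eqref{eq:define tilde xi} and \cref{alg:get tilde x}, i.e.\ by the fact that the procedure produces a vector with $\|\tilx_k^\powIteri\|_1 \in \{0,\|x_k^\powIteri\|_1, \maxSupp-1, \maxSupp\}$. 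Both are valid; yours is slightly cleaner in that it only uses what the lemma statement assumes and so applies to any feasible $\tilx^\powIteri$, not just the one produced by the reduction procedure.
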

With this property, we can derive the boxes with all relevant feasible upper RHS of the sub-problems.
The center of the box in iteration $i$ is $\nicefrac{\bup}{2^{\calI-i}}$ and the side length in each dimension is $2D$ (\cref{fig:oneIter}).

\begin{restatable}[\Rightscissors]{lemma}{properties}\label{lem:properties norm n leq D}
    Let $b = (\bup,\bdown)^T$ and assume $\calA x=b$ is feasible. Then the following two properties hold:
    
        \leftskip1em
        (i) \; For all $i \in \{2,\dots,\calI\},$ there exists $\bup[\ii] \in \ZZgeqzero^r$ with  $\big\| \nicefrac{\bup}{2^{\calI-i}} - \bup[\ii] \big\|_\infty \leq D$ and $\calA x^\powIteri = (\bup[\ii], \bdown[\ii])^T$ is feasible.
            
        (ii) \; There exists $\bup[(1)] \in \ZZgeqzero^r$ with $\|\bup[(1)]\|_\infty \leq D$ and $\calA x^\powIterOne=(\bup[(1)],\bdown[(1)])^T.$
\end{restatable}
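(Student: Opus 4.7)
The plan is to prove (i) by downward induction on $i$ from $\calI$ down to $2$, and then to handle (ii) as the terminal step of the same construction. The base case $i=\calI$ is immediate because $\bup[\powIterCalI]=\bup$, so $\|\nicefrac{\bup}{2^{0}}-\bup[\powIterCalI]\|_\infty = 0 \leq D$.

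For the inductive step from $i$ to $i-1$ (with $i\geq 2$), I start from a feasible $x^\powIteri$ for $\calA x=(\bup[\ii],\bdown[\ii])^T$ satisfying the promised proximity to $\nicefrac{\bup}{2^{\calI-i}}$. Applying the splitting procedure from \cref{alg:get tilde x} yields $\tilx^\powIteri$ with $\|\tilx_k^\powIteri\|_1\leq\maxSupp$ for every brick $k$, which ensures $\|\tilm[\ii]\|_\infty\leq\maxSupp$, and a residual $\hatx^\powIteri:=x^\powIteri-\tilx^\powIteri$ that by construction has non-negative even components. Setting $x^\powIteriminOne:=\nicefrac{\hatx^\powIteri}{2}\in\ZZgeqzero^h$ gives a feasible solution with upper RHS $\bup[\powIteriminOne]=\nicefrac{(\bup[\ii]-\tiln[\ii])}{2}$. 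A direct algebraic manipulation rewrites the target proximity as
\begin{align*}
\frac{\bup}{2^{\calI-(i-1)}}-\bup[\powIteriminOne] \;=\; \frac{1}{2}\left(\frac{\bup}{2^{\calI-i}}-\bup[\ii]+\tiln[\ii]\right),
\end{align*}
so the triangle inequality combined with the inductive hypothesis and \cref{lem:jobbound} (which supplies $\|\tiln[\ii]\|_\infty\leq D$) yields the required bound $\nicefrac{(D+D)}{2}=D$.

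For (ii), the same construction applied at the final step $i=2\mapsto i-1=1$ produces a feasible $x^\powIterOne$. By \eqref{eq:m_k^(i)} and the defining choice of $\calI$, we have $\|\bdown[\powIterOne]\|_\infty\leq\maxSupp$, so $b^\powIterOne$ itself satisfies the hypothesis of \cref{lem:jobbound} (it plays the role of $\tilde{b}^\powIterOne$ at the bottom of the recursion, where no further splitting is performed), which directly yields $\|\bup[\powIterOne]\|_\infty\leq D$.

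The main obstacle I expect is ensuring internal consistency of the construction across iterations: the recursively produced $\bdown[\powIteriminOne]$ must agree with the value uniquely determined by \eqref{eq:m_k iter} and \cref{lem:m_k formula}, and $\hatx^\powIteri$ must genuinely consist of non-negative even entries so that $\nicefrac{\hatx^\powIteri}{2}$ is a valid integer solution. Both properties are engineered into \cref{alg:get tilde x} via the parity-correcting choice $\|\tilx_k^\powIteri\|_1\in\{\maxSupp-1,\maxSupp\}$. Once this is in place, the norm bound propagates cleanly through the halving step and delivers both claims.
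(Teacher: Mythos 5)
Your proposal is correct and mirrors the paper's own proof: downward induction from $i=\calI$ with the identity $\nicefrac{\bup}{2^{\calI-(i-1)}}-\bup[(i-1)]=\tfrac{1}{2}\big(\nicefrac{\bup}{2^{\calI-i}}-\bup[\ii]+\tiln[\ii]\big)$, the triangle inequality, and \cref{lem:jobbound} for the inductive step, then a separate argument for $i=1$ observing that $\|\bdown[(1)]\|_\infty\leq\maxSupp$ forces $\tilx^\powIterOne=x^\powIterOne$ so that $\bup[(1)]=\tiln[(1)]$ and \cref{lem:jobbound} applies directly. The consistency concern you flag (that the recursively produced $\bdown[(i-1)]$ must match the precomputed value from \eqref{eq:m_k iter}) is indeed the subtle point, and the paper likewise relies on \cref{alg:get tilde x} to engineer it silently.
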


This means that if there exists a feasible point $\bup[\ii]$ near $\nicefrac{\bup}{2^{\calI-i}},$ there also exists another feasible point $\bup[(i-1)]$ near $\nicefrac{\bup}{2^{\calI-(i-1)}}$ for all iterations $i \in \{3,\dots,\calI\}.$
We also find a feasible point $\bup[(1)]$ near $\nullvec$ if there is a feasible point $\bup[(2)]$ near $\nicefrac{\bup}{2^{\calI-2}}.$
Altogether, this gives us a path from $\nullvec$ to $n$, with at least one feasible point near each $\nicefrac{\bup}{2^{\calI-i}}$ for all $i \in \{2,\dots,\calI\}.$

In the following, we determine the total number of needed iterations the algorithm needs by showing how we can determine the number of iterations until the lower RHS is equal to 0 in each entry.
We define $\calI_k$ as the number of \textit{non-zero} iterations of machine-type $k.$ 
We say that an iteration $i$ of machine-type $k$ is non-zero when $\bdown[\ii]_k \neq 0.$

\begin{restatable}[\Rightscissors]{lemma}{IkFormula}\label{lem:calI_k formula}
    Let $k\in[n]$, $\bdown_k \in \ZZgeqzero.$ 
    Then the number of non-zero iterations $\calI_k$ can be determined by
    \begin{align*}
        \calI_k = \calIkFormula
    \end{align*}
\end{restatable}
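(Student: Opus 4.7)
The plan is to derive $\calI_k$ by combining the closed-form expression for $\bdown[\ii]_k$ from Lemma~\ref{lem:m_k formula} with an explicit analysis of the transition from non-zero to zero values as $i$ decreases. First, I observe that whenever the ``otherwise'' branch of Lemma~\ref{lem:m_k formula} applies, $\bdown[\ii]_k = \lfloor (\bdown_k + K)/2^{\calI-i} \rfloor - K$. Since the sequence $(\bdown[\ii]_k)_{i=\calI,\calI-1,\ldots,1}$ is monotone non-increasing (the recursion only removes a non-negative small piece and halves), the non-zero iterations form a contiguous suffix of $[\calI]$, so identifying the threshold iteration suffices and yields $\calI_k$ directly.

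Next, I would translate the characterization of a non-zero iteration into an inequality on $i$. The governing chain condition from Lemma~\ref{lem:m_k formula}, $\bdown[(i+1)]_k > K$, expanded via the closed form, becomes $\lfloor (\bdown_k + K)/2^{\calI - i - 1} \rfloor > 2K$, equivalently $2^{\calI - i - 1} \leq (\bdown_k + K)/(2K+1)$. Writing $L := \log\!\bigl((\bdown_k + K)/(2K+1)\bigr)$, this simplifies to $\calI - i - 1 \leq L$. Combined with the fact that $i = \calI$ is itself always a non-zero iteration (assuming $\bdown_k > 0$), the set of non-zero iterations is precisely
\[
\bigl\{\,i \in [\calI] \;:\; i = \calI \text{ or } i \geq \calI - L - 1\,\bigr\}.
\]

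Finally, I reparametrise via $j := \calI - i \in \{0, 1, \ldots, \calI - 1\}$ and count integer $j$ satisfying $j = 0$ or $j - 1 \leq L$, i.e.\ $j \leq L + 1$. When $L$ is an integer, the boundary value $j = L + 1$ is attained with equality, contributing $L + 2$ admissible $j$'s. When $L$ is non-integer, the largest admissible $j$ is $\lfloor L + 1 \rfloor = \lfloor L \rfloor + 1$, giving $\lfloor L \rfloor + 2 = \lceil L \rceil + 1$ admissible $j$'s. These two cases correspond precisely to the two branches of the stated formula.

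The main obstacle I anticipate is careful bookkeeping at the boundary: when $L$ is integer, the inequality $2^{\calI - i - 1} \leq (\bdown_k + K)/(2K+1)$ is attained with equality at exactly one extra iteration, which is the source of the $+2$ rather than $+1$ in the integer case. Additional care is needed to justify that the binding constraint is the chain condition $\bdown[(i+1)]_k > K$ rather than the pointwise sign of the floor expression appearing in Lemma~\ref{lem:m_k formula}, since the latter can momentarily evaluate to zero inside the non-zero suffix; monotonicity of $(\bdown[\ii]_k)_i$ is the key structural input that avoids this pitfall.
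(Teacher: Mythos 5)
Your proof is correct and takes essentially the same route as the paper: both expand Lemma~\ref{lem:m_k formula} via the geometric-series identity to the threshold $\log\bigl((\bdown_k + \maxSupp)/(2\maxSupp+1)\bigr)$ and read off the case-split count, differing only in that you enumerate the non-zero iterations directly (and state the simplified closed form $\bdown[\ii]_k=\lfloor(\bdown_k+\maxSupp)/2^{\calI-i}\rfloor-\maxSupp$ explicitly) while the paper locates the smallest $I_k$ with $\bdown[\ii]_k\le\maxSupp$ and adds one. Your closing worry that the floor expression might ``momentarily evaluate to zero inside the non-zero suffix'' is unfounded (within the ``otherwise'' branch that expression \emph{is} $\bdown[\ii]_k$, by Lemma~\ref{lem:m_k formula}); the implicit step that does deserve a word is the converse that $\bdown[(i+1)]_k>\maxSupp$ forces $\bdown[\ii]_k>0$, an assumption the paper itself elides in the sentence ``Thus, $i$ is the largest value with $\bdown[\ii]_k\le\maxSupp$''.
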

\begin{proof}[Proof Idea.]
We use \eqref{eq:m_k^(i)} in \cref{lem:m_k formula} to determine the iteration $i$ with $\bdown[\ii]_k > 0$ and $\bdown[(i-1)]_k = 0.$ As determining $\bdown[\ii]_k$ (\cref{alg:m_k^(i)}) is deterministic, this can be calculated uniquely.
\end{proof}

Having determined the number of non-zero iterations of each machine-type, we can directly derive the total number of iterations $\calI,$ as $\calI = \max_{k \in [n]}\calI_k.$ 
\begin{restatable}[\Rightscissors]{lemma}{IFormula} \label{lem:calI}
    For the total number of iterations $\calI$ it holds that
    \begin{align*}
        \calI = \calIFormula
    \end{align*}
\end{restatable}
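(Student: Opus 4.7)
The plan is to exploit the observation stated just before the lemma, namely $\calI = \max_{k \in [n]} \calI_k$, and reduce the claim to \cref{lem:calI_k formula}. Concretely, I would prove that $\calI_k$, viewed via the formula of \cref{lem:calI_k formula}, is monotone non-decreasing as a function of $\bdown_k$. This forces the maximum over $k$ to be attained at an index $k^*$ with $\bdown_{k^*} = \bdown_{\max}$, and substituting $\bdown_{\max}$ into the two-case formula of \cref{lem:calI_k formula} then yields exactly the claimed expression for $\calI$.

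For the monotonicity step I would introduce the auxiliary function
\[
    g(x) := \log \big(\nicefrac{(x + \maxSupp)}{(2\maxSupp + 1)}\big),
\]
which is strictly increasing on $\ZZgeqzero$. Rewritten via $g$, \cref{lem:calI_k formula} says $\calI_k = g(\bdown_k) + 2$ when $g(\bdown_k) \in \ZZ$ and $\calI_k = \lceil g(\bdown_k) \rceil + 1$ otherwise. A short case analysis around each integer value $m$ of $g$ then pins down the composition: for $g(x) \in (m-1, m)$ one gets $\calI_k = m + 1$; at the integer $g(x) = m$ the value jumps to $m + 2$; and for $g(x) \in (m, m+1)$ it stays at $m + 2$. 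Hence $\calI_k$ is piecewise constant between consecutive integer values of $g$ and only jumps upward by one at each integer. Since $g$ is monotone in $\bdown_k$, so is $\calI_k$.

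The only delicate point — and the main obstacle worth flagging — is checking that the special integer branch in the piecewise definition of $\calI_k$ does not destroy monotonicity at the boundary between the two cases. The case analysis above handles this cleanly, since the `integer' branch coincides precisely with the upward jump one already obtains from $\lceil \cdot \rceil$ immediately to the right. After that, the proof is just a substitution of $\bdown_{\max}$ into \cref{lem:calI_k formula}, yielding the advertised formula verbatim.
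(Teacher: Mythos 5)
Your proof is correct and follows essentially the same route as the paper's, which simply invokes $\calI = \max_{k} \calI_k$ and applies \cref{lem:calI_k formula} ``directly.'' You go further by explicitly verifying the monotonicity of $\calI_k$ in $\bdown_k$ (in particular, that the integer branch does not break it), which the paper's proof silently presupposes; your case analysis around integer values of $g$ resolves this cleanly.
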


With this in mind, we are now able to derive an algorithm where we start with a small \nfold ILP in the first iteration and build up the overall solution of \eqref{eq:nfold} step by step.

\subsection[Solving the n-fold ILP]{Solving the \nfold ILP}
In this section, we introduce an algorithm that determines the feasibility of the \nfold ILP \eqref{eq:nfold} with only non-negative entries in $\calA$. Combined with the reduction in \cref{sec:reduction}, this results in the following:
\begin{restatable}{lemma}{mainthm}\label{lem:main}
    The feasibility of combinatorial \nfold ILPs \eqref{eq:nfold} where $\calA$ contains only non-negative coefficients and $\Delta$ is the largest coefficient in $\calA$ can be determined in time $\RTfeasibilityO$.
\end{restatable}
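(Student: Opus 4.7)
The plan is to describe a dynamic-programming algorithm implementing the top-down decomposition of \cref{sec:topDown}. I first precompute the number of iterations $\calI = \calO(\log b_{\text{def}})$ via \cref{lem:calI} and each lower RHS $\bdown[\ii]$ via \cref{lem:m_k formula}, so only the upper RHS is dynamic. For each $i \in [\calI]$, I maintain a set $S_i$ of all $v \in \ZZgeqzero^r$ such that $\calA x = (v, \bdown[\ii])^T$ is feasible \emph{and} $v$ lies in the box dictated by \cref{lem:properties norm n leq D}: either $[0,D]^r$ (for $i = 1$) or $\{v : \|v - \nicefrac{\bup}{2^{\calI-i}}\|_\infty \le D\}$ (for $i \ge 2$). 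The algorithm outputs ``feasible'' iff $\bup \in S_\calI$.

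The transition $S_{i-1} \to S_i$ exploits the identity $b^\powIteri = 2 b^\powIteriminOne + \tilde{b}^\powIteri$. Crucially, only once per iteration I compute the set $T_i$ of all feasible upper RHSs $\tiln[\ii]$ for the small sub-problem with fixed lower RHS $\tilm[\ii]$ (where $\|\tilm[\ii]\|_\infty \le \maxSupp$); by \cref{lem:jobbound}, $T_i \subseteq [0,D]^r$. Then $S_i = \{2s + t : s \in S_{i-1}, t \in T_i\} \cap \{v : \|v - \nicefrac{\bup}{2^{\calI-i}}\|_\infty \le D\}$. To compute $T_i$, I run a block-by-block DP: after a preprocessing step that merges duplicate columns in each $A_k$ (leaving at most $(\Delta+1)^r$ distinct column-types), an inner DP over the columns of each block precomputes the contribution set $C_k := \{A_k y : y \in \ZZgeqzero^{t_k},\ \onevec^T y = \tilm[\ii]_k\} \cap [0,D]^r$. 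Partial sums across blocks are then accumulated via a Minkowski sum truncated to $[0,D]^r$.

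A size count gives $|S_i|, |T_i| \le (2D+1)^r = (nr\Delta)^{\Or}$ and $|C_k| \le (\maxSupp \cdot \Delta + 1)^r$, so each Minkowski-sum step costs $(nr\Delta)^{\Or}$. Summing over the $n$ blocks and the $\calI$ iterations yields the claimed running time $\RTfeasibilityO$. Correctness in the ``only if'' direction is immediate from \cref{lem:properties norm n leq D}. The main obstacle I expect is the converse: given a chain $\bup[\powIterOne], \ldots, \bup[\powIterCalI] = \bup$ with $\bup[\ii] = 2\bup[\powIteriminOne] + \tiln[\ii]$ produced by the algorithm, I must lift it to an integral solution of~\eqref{eq:nfold} by inductively gluing $x^\powIteri = 2 x^\powIteriminOne + \tilx^\powIteri$ (with $\tilx^\powIteri$ realizing $\tiln[\ii]$ in the small sub-problem). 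The verification that the lower-block constraints $\onevec^T x^{(i,k)} = \bdown[\ii]_k$ are preserved at every level of the recursion then follows inductively from the parity-preserving identity $\bdown[\ii] = 2 \bdown[\powIteriminOne] + \tilm[\ii]$ underlying~\eqref{eq:define tilde xi}.
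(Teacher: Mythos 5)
Your proposal is correct and matches the paper's proof in all essential respects: precompute $\calI$ and the lower right-hand sides $\bdown[\ii]$ via \cref{lem:calI_k formula,lem:calI}, maintain box-pruned sets of feasible upper RHS (your $S_i$ = the paper's $N^\powIteri$) across iterations by combining doubled solutions with a small-problem contribution set (your $T_i$ = the paper's $\tilde{N}^\powIteri$), compute each $T_i$ by a block-by-block Minkowski-sum/boolean-convolution DP (the paper's \cref{alg:dyn prog}), and argue correctness in the two directions exactly as you sketch — the existence direction via \cref{lem:properties norm n leq D} and the converse by gluing $x^\powIteri = 2x^\powIteriminOne + \tilx^\powIteri$, using that $\bdown[\ii] = 2\bdown[\powIteriminOne] + \tilm[\ii]$ holds by construction. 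The only deviation is a low-level implementation choice: you compute the per-block base sets $C_k$ by column-deduplication plus an inner knapsack DP, whereas the paper fills its base table $BT(\nu,k)$ by invoking the Jansen--Rohwedder ILP algorithm on $A_k\tilx_k = \nu,\ \onevec^T\tilx_k = \tilm[\ii]_k$ for each candidate $\nu \in \{0,\dots,K\Delta\}^r$; both are valid and yield the claimed running time.
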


\begin{proof}[Proof Idea.]
In the preceding section we derived the property that we only have to solve small sub-problems (\cref{lem:jobbound}) and that we can combine their solutions in an efficient way to solve \eqref{eq:nfold}.
We split the algorithm into two parts. In the preprocessing we generate solutions for all required small sub-problems. Then, we generate the final solution by combining the intermediate solutions.
A detailed proof of both, the dynamic program and the combining step can be found in the appendix.

\paragraph*{Preprocessing}
First, determine the number of non-zero iterations $\calI_k$ for all $k \in [n]$ and the total number of iterations $\calI$ with \cref{lem:calI_k formula} and \cref{lem:calI}.
Then determine all required lower RHS $\bdown[\ii], \tilm[\ii]$ and $\hatm[\ii]$ for all $i \in [\calI]$ as described in \cref{sec:topDown} (and \cref{alg:m_k^(i)}).
For each vector $\tilm[\ii]$ define the set of required upper RHS in iteration $i:$
\begin{align*}
    \tilde{N}^\powIteri = \Big\{\tiln[\ii] \in \ZZgeqzero^r \mid \calA \tilx^\powIteri = \big(\tiln[\ii],\tilm[\ii]\big)^T, \tilx^\powIteri \in \ZZgeqzero^h, \tiln[\ii] \leq D\Big\}.
\end{align*}

The elements in $\tilde{N}^\powIteri$ can be calculated via dynamic programming (\cref{alg:dyn prog}). Instead of determining the feasibility of the complete \nfold $\calA \tilx^\powIteri = \big(\tiln[\ii],\tilm[\ii]\big)^T$, we split it into its bricks. For each brick $k \in [n]$ and upper RHS $\nu \in \{0,\dots,K\Delta\}^r$, we determine the feasibility of $A_k \tilx_k = \nu$ with $\|\tilx_k\|_1 = \tilm[\ii]_k$. These feasible points are then combined into the set $\tilde{N}^\powIteri$.

\begin{algorithm}
\caption{dynamicProgram}
\algcomment{\textbf{\cref{alg:dyn prog}:} The base table (BT) contains the feasibility of all possible points for each individual block. The dynamic table $DT$ combines the feasibility of the sub-problems. The set of required points is returned.}

\textbf{Input:} Lower RHS $\tilm[\ii]$ \\
\textbf{Output:} Set $\tilde{N}^\powIteri$ of required points (upper RHS)

\begin{algorithmic}[1]
\For{all $k \in [n]$ and $\nu \in \{0, \dots,K\Delta\}^r$} 
\Comment{Base Case}
    \State $\vcenter{
    \begin{flalign*}
        &BT(\nu,k ) \gets \begin{cases} 
  \trueCapital, & \text{if } A_k \tilx_k = \nu \text{ with } \|\tilx_k\|_1 = \tilm[\ii]_k, \tilx_k \in \ZZgeqzero^{t_i} \text{ is feasible} \\ 
  \falseCapital, & \text{otherwise } 
  \end{cases}&
  \end{flalign*}}$
\EndFor
\State $\vcenter{
    \begin{flalign*}
        & DT(\nu, 1 ) \gets BT(\nu,1)&
    \end{flalign*}}$
    
\For{all $k = 2,\dots,n$ and $\nu \in \{0, \dots,D\}^r$} \Comment{Inductive Step}
    \State $\vcenter{
    \begin{flalign*}
        & DT(\nu, k ) \gets \bigvee_{\substack{\nu'+\nu'' = \nu,\\
    \nu',\nu'' \text{ integral vectors}}} \big(DT(\nu',k-1) \land  BT(\nu'',k)\big)&
    \end{flalign*}}$
\EndFor
\Return $\big\{\nu \in \{0, \dots,D\}^r \mid DT(\nu,n) = \trueCapital \big\}$
\end{algorithmic}

\label{alg:dyn prog}
\end{algorithm}
The feasibility of the small sub-problems (base case) can be determined using a standard ILP algorithm. Applying the currently Jansen-Rohwedder algorithm~\cite{JR23} yields a running time of 
\begin{align}\label{eq:pcmax rt}
    O(\sqrt{r+1} \Delta)^{(1+o(1))(r+1)} + O(t_k(r+1)).
\end{align}

\paragraph*{Combining the Solutions}
Having solved all small sub-problems, we are now able to iteratively derive the feasibility of \eqref{eq:nfold}.
In the first iteration, set $N^\powIterOne := \tilde{N}^\powIterOne$ as the set of feasible points after this iteration. Note that for all $\bup[(1)] \in N^\powIterOne$ it holds that $\calA x^\powIterOne=(\bup[(1)],\bdown[(1)])^T$ is feasible.

In every other iteration $i = 2,\dots,\calI,$ we first double the solutions of the previous iteration and obtain the feasible ILP $\calA \hatx^\powIteri=(\hatn[\ii],\hatm[\ii])^T,$ with $\hatx^\powIteri = 2 x^\powIteriminOne, \hatm[\ii] = 2 \bdown[(i-1)]$ and $\hatn[\ii] = 2 \bup[(i-1)]$.
Let $\hat{N}^\powIteri := \big\{\hatn[\ii] \in \ZZgeqzero^r \mid \hatn[\ii] = 2 \bup[(i-1)], \bup[(i-1)] \in N^\powIteriminOne\big\}$ be the set of feasible points after that step. 
We now construct the set $N^\powIteri$ of the feasible points after this iteration by combining all vectors in $\hat{N}^\powIteri$ with all vectors in $\tilde{N}^\powIteri$ and only keep the ones close enough to $\nicefrac{\bup}{2^{\calI-i}}$ (\cref{lem:properties norm n leq D}), i.e.,
\begin{align*}
    N^\powIteri := \big\{\bup[\ii] \in \ZZgeqzero^r \mid \bup[\ii] = \hatn[\ii] + \tiln[\ii], \hatn[\ii] \in \hat{N}^\powIteri, \tiln[\ii] \in \tilde{N}^\powIteri, \big\|\nicefrac{\bup}{2^{\calI-i}} - \bup[\ii]\big\|_\infty \leq D\big\}.
\end{align*}
Note again that for all $\bup[\ii] \in N^\powIteri$ it holds that $\calA x^\powIteri=(\bup[\ii],\bdown[\ii])^T$ is feasible.

Repeating these steps results in a set $N^\powIterCalI$ of feasible points with 
$\calA x^\powIterCalI = \big(\bup[(\calI)],\bdown[(\calI)] \big)^T$ and $\bdown[(\calI)] = \bdown, \bup[(\calI)] \in N^\powIterCalI.$ 
Therefore, if $\bup \in N^\powIterCalI$ the \nfold ILP \eqref{eq:nfold} is feasible.

\paragraph*{Running Time}
The running time of the preprocessing is dominated by the dynamic program. 
In the worst case, it is called $\Oh(n)$ times.
Creating the base table (BT) of the DP, we have to solve $n (K\Delta+1)^r$ small sub-problems, each having a running time of \eqref{eq:pcmax rt}.
The dynamic program then constructs a dynamic table (DT) with $n(D+1)^r$ entries. Each can be calculated by doing a boolean convolution using FFT which takes time $\Oh(n(D+1)^r \cdot \log(n(D+1)^r))$ (see e.g.~\cite{BN21}).
With $t_k \le h$ for all $k\in [n]$, the total running time for the preprocessing amounts to
\begin{align*}
    &\underbrace{n (K\Delta+1)^r \cdot O(\sqrt{r+1} \Delta)^{(1+o(1))(r+1)} + O(h(r+1))}_\text{Base Case} + \underbrace{n^2 (D+1)^\Or \cdot \log(n (D+1)^r)}_\text{Inductive Step}\\
    &= (D+1)^\Or
\end{align*}

The number of iterations, we need to combine the solutions is generally bounded by $\log(\bdown_{\max})$. However, if the upper RHS is relatively small in all entries, it becomes $\nullvec$ before $\bdown$ does. In order to be feasible in that case, we only need to check whether a $\nullvec$ in each $A_i$ where the corresponding entry in the lower RHS is not 0, exists. Thus, these sub-problems are trivial to solve and therefore, we may say that we need $\log(b_{\text{def}})$ iterations, where $b_{\text{def}} := \min(\bup_{\max},\bdown_{\max})$. Each set $N^\powIteri, \tilde{N}^\powIteri$ and $\hat{N}^\powIteri$ contains at most $(D+1)^r$ vectors. Therefore, determining each set is possible in time $D^\Or.$
With $h \le (\Delta+1)^r$ this yields a total running time of $\RTfeasibilityO$.
\end{proof}

\subsection{Allowing Negative Coefficients}\label{sec:reduction}
In this section, we provide a linear-time reduction from a general combinatorial \nfold ILP \eqref{eq:nfold} to an ILP of the same form which only allows non-negative entries in $\calA.$ More concretely, we prove:
\begin{lemma}
    Let $X \subseteq \ZZgeqzero^h$ be the set of feasible solutions of the \nfold ILP \eqref{eq:nfold} with sub-matrices $A_i \in \ZZ^{r \times t_i} \forall i \in [n]$ and RHS vector $b \in \ZZ^{r+n}$ and let $\Delta$ be the largest absolute value in $A_i$, i.e., $\Delta = \max_{i \in [n]} \|A_i\|_\infty$. Then there are matrices  $A_i' \in \ZZgeqzero^{r\times t_i}  \forall i \in [n]$ and a vector $b' \in \ZZgeqzero^{r+n}$ such that $X$ is the set of feasible solutions to  $\calA' x = b'$ of the form \eqref{eq:nfold} with sub-matrices $A_i$ and the entries in $A_i'$ are bounded by $\Delta' = 2\Delta$. 
\end{lemma}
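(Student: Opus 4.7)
The plan is to exploit the fact that the lower blocks of $\calA$ are all-ones row vectors $\onevec_{t_i}^T$, so that every feasible solution satisfies $\onevec_{t_i}^T x^{(i)} = \bdown_i$ for each $i \in [n]$. Adding $\Delta$ times this equation to each of the $r$ upper constraints leaves the feasible set unchanged while uniformly shifting every entry of $A_i$ by $+\Delta$, pushing all coefficients into the range $[0, 2\Delta]$.

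Concretely, I would set $A_i'[\ell, j] := A_i[\ell, j] + \Delta$ for all $i \in [n], \ell \in [r], j \in [t_i]$, keep the lower blocks equal to $\onevec_{t_i}^T$ so that the new system is still of the form \eqref{eq:nfold}, and define the new right-hand side by
\begin{align*}
    b'_\ell := \bup_\ell + \Delta \sum_{i=1}^n \bdown_i \quad \text{for } \ell \in [r], \qquad b'_{r+i} := \bdown_i \quad \text{for } i \in [n].
\end{align*}
Since $|A_i[\ell,j]| \leq \Delta$, this gives $A_i'[\ell,j] \in [0, 2\Delta]$, yielding both non-negativity of $\calA'$ and the bound $\Delta' = 2\Delta$. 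To verify that the feasible set is preserved, for any $x$ satisfying the lower constraints one computes
\begin{align*}
    \sum_{i=1}^n A_i'[\ell,:]\, x^{(i)} = \sum_{i=1}^n A_i[\ell,:]\, x^{(i)} + \Delta \sum_{i=1}^n \onevec_{t_i}^T x^{(i)} = \sum_{i=1}^n A_i[\ell,:]\, x^{(i)} + \Delta \sum_{i=1}^n \bdown_i,
\end{align*}
so the new upper row $\ell$ is satisfied iff the old one is; since the lower blocks and $\bdown$ are unchanged, both ILPs share exactly the same set of solutions $X$.

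The one remaining subtlety is ensuring $b' \in \ZZgeqzero^{r+n}$. If $\bdown_i < 0$ for some $i$, then $\onevec_{t_i}^T x^{(i)} = \bdown_i$ is infeasible for $x^{(i)} \in \ZZgeqzero^{t_i}$, so $X = \emptyset$ and we can return any trivially infeasible instance of the required form. Otherwise $\bdown \geq \nullvec$ and the lower part of $b'$ is already non-negative; for the upper part, if $X \neq \emptyset$ I would pick any $x^* \in X$ and bound
\begin{align*}
    \bup_\ell = \sum_{i=1}^n A_i[\ell,:]\, x^{*(i)} \geq -\Delta \sum_{i=1}^n \onevec_{t_i}^T x^{*(i)} = -\Delta \sum_{i=1}^n \bdown_i,
\end{align*}
so $b'_\ell \geq 0$, and if $X = \emptyset$ we again fall back on an explicitly infeasible output. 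The whole reduction only touches $O(rh)$ entries of $\calA$ and computes a single running sum for the RHS, which gives a linear-time construction. The main obstacle, namely reconciling the required non-negativity of $b'$ with the unrestricted sign of $\bup$ in the input, is resolved by precisely this case analysis.
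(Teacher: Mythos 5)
Your construction is identical to the paper's: shift every entry of $A_i$ by $+\Delta$ to land in $[0,2\Delta]$, and compensate the upper right-hand side by $\Delta\sum_i \bdown_i$ (the paper writes $\Delta\|\bdown\|_1$, which agrees since $\bdown\geq\nullvec$ is forced by $x\geq\nullvec$), exploiting that the lower all-ones rows fix $\onevec^T x^{(i)}=\bdown_i$. The only difference is that you explicitly address the non-negativity requirement on $b'$ via the case analysis (negative $\bdown_i$ or an empty $X$), a detail the paper's proof leaves implicit; this is a modest rigor improvement, not a different argument.
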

\begin{proof}
    Let $i\in[n]$. We construct $A_i'$ as by adding $\Delta$ to each entry in $A_i$, i.e., $A_i'[k,\ell]:=A_i[k,\ell]+\Delta$, where $A_i[k,\ell]$ is the entry in $A_i$ that is in the $k$-th row and $\ell$-th column (analogous notation for $A_i'[k,\ell]$). 
    Since the absolute entries in $A_i$ are smaller than or equal to $\Delta,$ all entries in $A_i'$ are non-negative and upper bounded by $2\Delta =: \Delta'$.

    Now we show how the RHS is adapted. Note that the local constraints of \eqref{eq:nfold} define the 1-norm of the solution vectors, i.e., for all $x \in X$ it holds that $\|x\|_1 = \|\bdown\|_1$. Viewing the matrix-multiplication as summing up vectors (columns of the matrix), we are summing up exactly $\|x\|_1$ vectors.
    Therefore, by changing the constraint matrix as described above, we know that exactly $\|\bdown\|_1 \cdot \Delta$ are added to each entry of the upper RHS. Thus, we define $\bup[']$ as follows: For each $j \in[r]$ set $\bup[']_j := \bup + \|\bdown\|_1 \cdot \Delta$. 
    Since the local constraints do not change, the lower RHS stays the same that is $\bdown['] := \bdown$.

\end{proof}
\subsection{Solving the Optimization Problem}
\label{subsec:Optimization}
While we focused our discussions on the problem of deciding whether a feasible solution to a given combinatorial \nfold exists, we note that some modifications further extend our results to finding an optimal solution according to some objective function, for instance $\max c^Tx, c \in \ZZ^h$. We need to adapt the upper bound on the support (\cref{lem:support}) and the dynamic program (\cref{alg:dyn prog}) according to this new objective function. For $\max c^Tx, c \in \ZZ^h$, the running time of the dynamic program changes as follows: Instead of determining the feasibility of the small sub-problems in the base case, we need the optimal solution of those. The currently best algorithm by Jansen and Rohwedder~\cite{JR23} achieves a running time of $O(\sqrt{r+1} \Delta)^{2(r+1)} + O(h(r+1))$. In the inductive step, we replace the boolean convolution with $(\max,+)$-convolution, which has generally has quadratic running time. 
This and the change of the support bound to $2(r+2)(\log(r+2)+\Delta+2)$ does not affect the asymptotic running time which yields:
\thmopt*

\section{Applying the Algorithm}
\label{sec:Applications}
To complete the discussion of our algorithm, we present exemplary applications of the algorithm to different contexts. Here, we introduce the considered problems, while we refer to \cref{sec:appendix-sched} for the detailed applications and resulting running times.

\paragraph*{Scheduling on Uniform Machines}
A scheduling instance is defined by the number of jobs $n \in \ZZgeqzero^d$ with processing times $p \in \ZZgeqzero^d$ and the number of machines $m \in \ZZgeqzero^\tau$ with speed values $s \in \ZZ_{\ge 1}^\tau.$
This means that $N:=\|n\|_1$ jobs of $d$ different sizes have to be scheduled on $M:=\norm{m}_1$ machines with $\tau_I$ different speeds. Since the processing times are integer, the total number of different job-types $d$ is bounded by $d \leq \pmax.$
Each job has to be executed on a single machine.

Let $\mathcal{M}$ be the set of all machines.
A \textit{schedule} $\sigma \colon \ZZgeqzero^d \to \mathcal{M}$ is a mapping that assigns all jobs to the available machines. 
The \textit{load} $L(m^{(k)})^{(\sigma)}$ of machine $m^{(k)} \in \mathcal{M}$ is the sum of the sizes of all jobs assigned to that machine by the schedule $\sigma.$
Assume $m^{(k)}$ has speed $s_k.$ 
Then we define the \textit{completion time} of machine $m^{(i)}$ by $C(m^{(i)})^{(\sigma)} := \nicefrac{L(m^{(i)})^{(\sigma)}}{s_k},$ i.e., the time when machine $m^{(i)}$ has finished processing all jobs that were assigned to it by $\sigma.$

The objective functions that we consider in this work are:
\begin{itemize}
    \item \textbf{Makespan Minimization ($\Cmax$):} $
            \min_\sigma \Big(\max_{m^{(i)} \in \mathcal{M}} C(m^{(i)})^{(\sigma)} \Big)$
    \item \textbf{Santa Claus ($\Cmin$):} $
            \max_\sigma \Big(\min_{m^{(i)} \in \mathcal{M}} C(m^{(i)})^{(\sigma)}\Big).$
\end{itemize}

\paragraph*{Closest String Problem}
When considering the \closeststring problem, we are given a set of $k$ strings $S = \{s_1,\dots,s_k\}$ of length $L \in \ZZ_{\ge1}$ from alphabet $[k]$ and an integer $d \in \ZZgeqzero$. The task is to find a string $c \in [k]^L$ of length $L$ (called the "closest string") such that the Hamming distance $d_H(c,s_i)$ between $c$ and any string $s_i \in S$ is at most $d$. The Hamming distance for two equal-length strings is the number of positions at which they differ.

\paragraph*{Imbalance Problem}
The input to the \imbalance problem is an undirected graph $G=(V,E)$ with $n$ vertices. An ordering of the vertices in $G$ is a bijective function $\pi=V\rightarrow[n]$. For $v\in V$, we define neighborhood of $v$ by $N(v):=\{u \in V \mid (u,v) \in E\}$, the set of left neighbors $L(v):=\{u\in N(v) \mid \pi(u)<\pi(v)\}$ and the set of right neighbors $R(v):=\{u\in N(v) \mid \pi(v)<\pi(u)\}$. Note that $R(v) = N(v)\setminus L(v)$. The imbalance of at $v\in V$ is defined as $\iota_\pi(v) = ||R(v)|-|L(v)||$ and the total imbalance of the ordering $\pi$ is $\iota(\pi) = \sum_{v\in V} \iota_\pi(v)$.
The aim is now to find an ordering $\pi$ that minimizes the total imbalance.

\section{Conclusion}
We have shown that we can solve combinatorial \nfold ILPs in time $\RTfeasibilityO.$ Our algorithm builds on insights regarding the partition of the entire problem into smaller and smaller sub-problems, which we solve and use to reconstruct the entire problem in a novel manner. Complementing our algorithmic results, we present applications of our algorithm to the world of scheduling on uniform machines, the closest string problem and imbalance. In the case of scheduling on uniform machines, we can apply techniques to bound $\Delta$ and achieve an algorithm with a running time matching a lower bound provided by the ETH for both makespan minimization and Santa Claus. For closest string our algorithm matches the current state of the art, but can, in contrast to existing algorithms, leverage bounded number of column-types to achieve improved running times. Finally, when applied to imbalance, our algorithm results in an improved running time over the state-of-the-art.
For both closest string and imbalance our algorithm matches the current state of the art, but can, in contrast to existing algorithms, leverage bounded number of column/vertex types to achieve improved running times.
\bibliography{bib}

\appendix

\section{Algorithm \-- Omitted details}\label{sec:appendix-algo}
In this appendix, we present omitted/shortened proofs and algorithmic subroutines. 
First, we present the full proofs of the statements that lead to our combinatorial \nfold algorithm. 
We give an algorithm (\cref{alg:m_k^(i)}) that shows the behavior of the lower RHS throughout the iterations of our procedure.

\begin{algorithm} 
\caption{getNumMachines}
\algcomment{\textbf{\cref{alg:m_k^(i)}:} Determines the entries in the lower RHS of each sub-problem $\bdown[\ii]_k, \tilm[\ii]_k, \hatm[\ii]_k$ in each iteration $i \in [\calI],$ given the entry $\bdown_k$ of the initial lower RHS.}

\textbf{Input:} Number of rows $r \in \ZZgeqzero$ in the global constraints, $k$-th entry $\bdown_k \in \ZZgeqzero$ of the RHS

\begin{algorithmic}[1]
\State $\Delta \gets \|\calA\|_\infty$
\State $\maxSupp \gets \maxSuppFormula$ 
\State $\bdown[(\calI)]_k \gets \bdown_k$ 
    \Comment{Set $\bdown[(\calI)]_k$}
\For{$i = \calI-1, \dots, 1$} \Comment{Iteratively determine other $\bdown[\ii]_k,\tilm[\ii]_k$ and $\hatm[\ii]_k$}
    \If{$\bdown[(i+1)]_k \leq \maxSupp$}
    \Comment{Zero iteration}
        \State $\tilm[(i+1)]_k \gets \bdown[(i+1)]_k$
        \State $\hatm[(i+1)]_k \gets 0$
        \State $\bdown[\ii]_k \gets 0$
    \Else
    \Comment{Non-zero iteration}
        
        \State  $z_k^\powIteriplusOne \gets \begin{cases} 
                0, & \txtif \big(\bdown[(i+1)]_k \mod 2\big) = (\maxSupp \mod 2) \\
                1, & \txtother
                \end{cases}$
                \Comment{Ensures integrity of $\bdown[\ii]_k$}
                \State $\tilm[(i+1)]_k \gets \maxSupp - z_k^\powIteriplusOne$
        \State $\hatm[(i+1)]_k \gets \bdown[(i+1)]_k - \tilm[(i+1)]_k$,
        \State $\bdown[\ii]_k \gets \nicefrac{\hatm[(i+1)]_k}{2}$
    \EndIf
\EndFor
\If{$\calI \ge 2$}
    \State $\tilm[(1)]_k \gets \bdown[(2)]_k$
\EndIf
\State $\hatm[(1)]_k \gets 0$
\end{algorithmic}
\label{alg:m_k^(i)}
\end{algorithm}
In each step, the entries in $\bdown[\ii]$ are integer and we can determine them only knowing $\bdown$. 

\mkInt*
\begin{proof}
    (By induction over the number of iterations.)

    \textbf{Base Case:} Assume $i = \calI.$ Then $\bdown[(\calI)]_k = \bdown_k.$ Since $\bdown_k$ is integer, $\bdown[(\calI)]_k$ is integer as well.

    \textbf{Induction Step:} Let $i \in [\calI-1]$ and assume $\bdown[(\ell)]_k$ is integer for all $\ell \in \{i+1,\dots,\calI\}.$

\textit{Case 1:} Assume $\bdown[(i+1)]_k \leq \maxSupp.$ Then $\bdown[(i)]_k = 0$ is integer.

 \textit{Case 2:} Assume $\bdown[(i+1)]_k > \maxSupp.$ 
            Since $\bdown[\ii]_k=\nicefrac{\hatm[(i+1)]_k}{2},$ we have to show that 
            \begin{align*}
                \hatm[(i+1)]_k &= \bdown[(i+1)]_k-\tilm[(i+1)]_k\\
                &= \bdown[(i+1)]_k - \maxSupp + z_k^\powIteriplusOne
            \end{align*}
            is even, with $z_k^\powIteriplusOne$ being defined as
            \begin{align*}
                z_k^\powIteriplusOne := \begin{cases} 
                0, & \text{if } \big(\bdown[(i+1)]_k \mod 2 \big) = (\maxSupp \mod 2) \\
                1, & \text{otherwise}. 
                \end{cases}
            \end{align*}
            On the one hand, if $\bdown[(i+1)]_k$ and $\maxSupp$ are both even or both odd, then $z_k^\powIteriplusOne = 0$ and $\hatm[(i+1)]_k = \bdown[(i+1)]_k-\maxSupp$ is even. 
            On the other hand, if either $\bdown[(i+1)]_k$ or $\maxSupp$ is even and the other value is odd, then $\bdown[(i+1)]_k-\maxSupp$ is odd and $\hatm[(i+1)]_k = \bdown[(i+1)]_k-\maxSupp+z_k^\powIteriplusOne  = \bdown[(i+1)]_k-\maxSupp+1$ is even.
            
            This implies that $\bdown[\ii]_k = \nicefrac{\hatm[(i+1)]_k}{2}$ is integer for all $i \in [\calI]$.
\end{proof}

\mkFormula*
\begin{proof}

We split the proof into two parts. First, we show that \eqref{eq:m_k^(i)} holds for $i = \calI$. Then, we derive the formula for $i < \calI$ using the knowledge of the first part.

\textbf{Part 1:} Assume $i = \calI.$ Then
\begin{align*}
    \bdown[\ii]_k &= \bdown[(\calI)]_k\\
    &= \Bigg\lfloor\frac{\bdown_k}{2^{\calI-\calI}} - \maxSupp \cdot \sum_{\ell=\calI}^{\calI-1} \Big( \frac{1}{2^{\calI-\ell}}\Big)\Bigg\rfloor \\
    &= \lfloor \bdown_k \rfloor.
\end{align*}
Since $\bdown_k$ is integer, we get $\bdown[(\calI)]_k = \bdown_k.$

\textbf{Part 2:} Let $i \in [\calI-1].$

\textit{Case 1:} Assume $\bdown[(i+1)]_k \leq \maxSupp.$ Then, by definition, it holds that $\bdown[\ii]_k = 0.$

\textit{Case 2:} Assume $\bdown[(i+1)]_k > \maxSupp.$
    When $\bdown[(i+1)]_k$ is known and $\bdown[(i+1)]_k > \maxSupp,$ then $\bdown[\ii]_k$ has exactly two possible values, depending on $z_k^\powIteriplusOne,$ i.e., the parity of $\maxSupp$ and $\bdown[(i+1)]_k.$
    Since only $\bdown[(\calI)]_k$ is given, $\bdown[\ii]_k$ can take one of at most $2^{\calI-i}$ values which can be formulated as the following continued fraction:

\begin{align*}
    \bdown[\ii]_k 
    &= \frac{\bdown[(i+1)]_k - \big(\maxSupp+z_k^\powIteriplusOne \big)}{2} &\\
    &= \frac{\cfrac{\bdown[(i+2)]_k - \big(\maxSupp+z_k^{(i+2)} \big)}{2} - \big(\maxSupp+z_k^\powIteriplusOne\big)}{2} &\\
    &\qquad\cfrac{\cfrac{\bdown[(\calI)]_k-\big(\maxSupp+z_k^\powIterCalI \big)}{2} -\big(\maxSupp+z_k^{(\calI-1)}\big)}{2} \\
    &= \qquad\qquad\qquad\qquad\qquad\qquad\qquad\qquad\qquad\ddots\qquad\qquad \\
    &\qquad \frac{\qquad\qquad\qquad\qquad\qquad\qquad\qquad\qquad\qquad\qquad - \big(\maxSupp + z_k^\powIteriplusOne\big)}{2}\\
    &= \frac{\bdown[(\calI)]_k} {2^{\calI-i}} - \frac{\maxSupp+z_k^\powIterCalI} {2^{\calI-i}} - \frac{\maxSupp+z_k^{(\calI-1)}}{2^{\calI-i-1}} - \cdots - \frac{\maxSupp+z_k^\powIteriplusOne}{2^1}\\
    &= \frac{\bdown_k}{2^{\calI-i}} - \sum_{\ell=i}^{\calI-1} \bigg(\frac{\maxSupp+z_k^{(\calI-(\ell-i))}} {2^{\calI-\ell}}\bigg)
\end{align*}

Note that $\bdown[\ii]_k$ is minimal if $z_k^{(\calI - (\ell-i))} = 1$ for all $\ell \in \{i,\dots, \calI-1\}.$ 
Similarly, $\bdown[\ii]_k$ is maximal if $z_k^{(\calI - (\ell-i))} = 0$ for all $\ell \in \{i,\dots, \calI-1\}.$ 
Therefore, 

\begin{align*}
	\bdown[\ii]_k 
    &\in \Bigg[\frac{\bdown_k}{2^{\calI-i}} - \sum_{\ell=i}^{\calI-1} \Big(\frac{\maxSupp+1}{2^{\calI-\ell}}\Big), 
    \frac{\bdown_k}{2^{\calI-i}} - \sum_{\ell=i}^{\calI-1} \Big(\frac{\maxSupp+0}{2^{\calI-\ell}}\Big)\Bigg]\\
    &= \Bigg[\frac{\bdown_k}{2^{\calI-i}} - (\maxSupp+1) \cdot \sum_{\ell=i}^{\calI-1} \Big(\frac{1}{2^{\calI-\ell}}\Big), 
    \frac{\bdown_k}{2^{\calI-i}} - \maxSupp \cdot \sum_{\ell=i}^{\calI-1} \Big( \frac{1}{2^{\calI-\ell}}\Big)\Bigg]\\
    &= \Bigg[\frac{\bdown_k}{2^{\calI-i}} - \maxSupp \cdot \sum_{\ell=i}^{\calI-1} \Big(\frac{1}{2^{\calI-\ell}}\Big) - \sum_{\ell=i}^{\calI-1} \Big(\frac{1}{2^{\calI-\ell}}\Big), \frac{\bdown_k}{2^{\calI-i}} - \maxSupp \cdot \sum_{\ell=i}^{\calI-1} \Big(\frac{1}{2^{\calI-\ell}}\Big)\Bigg].\\
\end{align*}

Since the geometric series $\sum_{\ell=i}^{\calI-1} (\nicefrac{1}{2^{\calI-\ell}})$ is less than $1,$ we know that there is at most one integer value in the interval. 
\cref{lem:m_k integer} states that $\bdown[\ii]_k$ must be integer. 
This implies that there exists exactly one integer value in the interval and
\begin{align*}
    \Bigg\lceil\frac{\bdown_k}{2^{\calI-i}} - \maxSupp \cdot \sum_{\ell=i}^{\calI-1} \Big(\frac{1}{2^{\calI-\ell}}\Big) - \sum_{\ell=i}^{\calI-1} \Big(\frac{1}{2^{\calI-\ell}}\Big)\Bigg\rceil
    =
	\Bigg\lfloor\frac{\bdown_k}{2^{\calI-i}} - \maxSupp \cdot \sum_{\ell=i}^{\calI-1} \Big(\frac{1}{2^{\calI-\ell}}\Big) \Bigg\rfloor.
\end{align*}
Thus, the number of machines $\bdown[\ii]_k$ of type $k$ in iteration $i$ can be uniquely determined by
\begin{align*}
        \bdown[\ii]_k = \begin{cases}
            0, & \text{if } i \neq \calI \text{ and } \bdown[(i+1)]_k \leq \maxSupp\\
            \Big\lfloor\frac{\bdown_k}{2^{\calI-i}} - \maxSupp \cdot \sum_{\ell=i}^{\calI-1} \big(\frac{1}{2^{\calI-\ell}}\big)\Big\rfloor, & \text{otherwise.}
        \end{cases}
    \end{align*}
\end{proof}
The algorithm to compute this value is given above, in \cref{alg:m_k^(i)}.

To ensure that the RHS can be divided by two, \cref{alg:get tilde x} describes in detail, how we reduce the solution vector in each iteration.
\begin{algorithm}
\caption{reduceSolution}
\algcomment{\textbf{\cref{alg:get tilde x}:} Determines a vector $\tilx_k^\powIteri$ such that if subtracted from $x_k^\powIteri,$ the difference $\hatx_k^\powIteri = x_k^\powIteri - \tilx_k^\powIteri$ is even in each entry.}

\textbf{Input:} Intermediate solution $x_k^\powIteri \in \ZZgeqzero^{t_k}$\\
\textbf{Output:} Solution of the sub-problem $\tilx_k^\powIteri = \big(\tilx_{k,1}^\powIteri, \dots, \tilx_{k,t_k}^\powIteri\big)^T \in \ZZgeqzero^{t_k}$

\begin{algorithmic}[1]

\State counter $\gets \maxSupp - z_k^\powIteri$ 

\For{$\ell \in[t_k]$}
    \If{$x_{k,\ell}^\powIteri$ is odd} 
    \Comment{Ensures that $\tilx_{k,\ell}^\powIteri$ and $x_{k,\ell}^\powIteri$ have the same parity}
        \State $\tilx_{k,\ell}^\powIteri \gets 1$ 
		\State counter $\gets$ counter$-1$
    \Else
        \State $\tilx_{k,\ell}^\powIteri \gets 0$
    \EndIf
\EndFor
\Statex
\While{counter $\geq 2$} 
\Comment{Ensures $\|\tilx_k^\powIteri\|_1 \leq \maxSupp$}
    \For{$\ell \in[t_k]$}
        \If{$x_{k,\ell}^\powIteri - \tilx_{k,\ell}^\powIteri \geq 2$ \textbf{and} counter $\geq 2$}
            \State $\tilx_{k,\ell}^\powIteri \gets \tilx_{k,\ell}^\powIteri + 2$
            \Comment{Add 2 to keep the parity}
			\State counter $\gets$ counter$-2$;
        \EndIf
    \EndFor
\EndWhile
\State\Return $\big(\tilx_{k,1}^\powIteri, \dots, \tilx_{k,t_k}^\powIteri\big)^T$
\end{algorithmic}
\label{alg:get tilde x}
\end{algorithm}

Next, we show that the size of the sub-problems that need to be solved in each iteration is bounded. 

\lengthJobs*
\begin{proof}
Let $\tilde{b}^\powIteri = \big(\tiln[\ii],\tilm[\ii])^T$ with $i \in [\calI]$ and $\tilm[\ii] \leq \maxSupp.$
Assume $\calA\tilx^\powIteri = \tilde{b}^\powIteri$ is feasible.

We can split the ILP of the sub-problem $\calA \tilx^\powIteri = \big(\tiln[\ii], \tilm[\ii]\big)^T$ into two parts:
\begin{align*}
    \begin{pmatrix}
        A_1 & A_2 & \dots & A_n \\
    \end{pmatrix} \cdot \tilx^\powIteri = \tiln[\ii]
\end{align*}
and
\begin{align*}
    \begin{pmatrix}
        \onevec^T &&&\\
        & \onevec^T &&\\
        && \ddots &\\
        &&& \onevec^T
    \end{pmatrix} \cdot \tilx^\powIteri = \tilm[\ii].
\end{align*}
Taking a closer look at the first part, we get
\begin{align*}
    \big\|\tiln[\ii]\big\|_\infty= \big\|\begin{pmatrix}
        A_1 & A_2 & \dots & A_n \\
    \end{pmatrix} \cdot \tilx^\powIteri\big\|_\infty= \big\|A_1 \tilx_1^\powIteri + A_2 \tilx_2^\powIteri + \cdots + A_n \tilx_n^\powIteri\big\|_\infty. 
\end{align*} 
The triangle inequality implies
\begin{align*}
    &\big\|A_1 \tilx_1^\powIteri + A_2 \tilx_2^\powIteri + \cdots + A_n \tilx_n^\powIteri\big\|_\infty
     \leq \big\|A_1 \tilx_1^\powIteri\big\|_\infty + \big\|A_2 \tilx_2^\powIteri\big\|_\infty + \cdots + \big\|A_n \tilx_n^\powIteri\big\|_\infty.
\end{align*}

By definition, we know $\big\|\tilx_k^\powIteri \big\|_1 \in \big\{0,\big\|x_k^\powIteri\big\|_1, \maxSupp-1, \maxSupp\big\}$ and thus $\big\|\tilx_k^\powIteri\big\|_1 \leq \maxSupp$ for all $k \in [n]$.
Since no value in $A_k$ is larger than $\Delta,$ it holds that $\big\|A_k \tilx_k^\powIteri\big\|_\infty \leq \maxSupp \cdot \Delta.$
Hence,
\begin{align*}
    \big\|\tiln[\ii]\big\|_\infty
    \leq \big\|A_1 \tilx_1^\powIteri\big\|_\infty + \big\|A_2 \tilx_2^\powIteri\big\|_\infty + \cdots + \big\|A_n \tilx_n^\powIteri\big\|_\infty
    \leq n \cdot \maxSupp \cdot \Delta.
\end{align*}
Now, with the definition of $D,$ it follows that $\big\|\tiln[\ii]\big\|_\infty \leq D.$
\end{proof}

We now show that in each iteration, we only need to consider job-vectors that are near the center point.
\properties*
\begin{proof}
Let $b = (\bup,\bdown)^T$ and assume $\calA x=b$ is feasible.

\subparagraph*{Property 1:} Proof by induction over the iterations.

    \textbf{Base Case:} Assume $i = \calI.$ 
    By definition it holds that $\bdown[(\calI)] = \bdown, \bup[(\calI)] = \bup$ and $\big\| \nicefrac{\bup}{2^{\calI-\calI}} - \bup[(\calI)] \big\|_\infty = 0 \leq D.$ Since $\calA x=(\bup,\bdown)^T$ is feasible, this also holds for $\calA x^\powIterCalI=(\bup[(\calI)],\bdown[(\calI)])^T.$

    \textbf{Induction Step:} Let $i \in \{3,\dots,\calI\}$ and assume that there exists a feasible point $\bup[\ii]$ near $\nicefrac{\bup}{2^{\calI-i}},$ i.e.,
            \begin{align} \label{eq:properties IV}
                \bigg\| \frac{\bup}{2^{\calI-i}} - \bup[\ii] \bigg\|_\infty \leq D
            \end{align}
            holds, and assume $\calA x^\powIteri = (\bup[\ii], \bdown[\ii])^T$ is feasible.

            We now show that there also exists a feasible point $\bup[(i-1)] \in \ZZgeqzero^r$ near $\nicefrac{\bup}{2^{\calI-(i-1)}},$ that can be determined from the reductions introduced in \cref{sec:topDown}.

            By definition of $\bup[(i-1)]$, it holds that $\bup[(i-1)] = \nicefrac{(\bup[\ii] - \tiln[\ii])}{2}.$
            It follows that
        \begin{align*}
            \bigg\|\frac{\bup}{2^{\calI-(i-1)}} - \bup[(i-1)]\bigg\|_\infty
            &= \Bigg\|\frac{\bup}{2^{\calI-i+1}} 
            -\Bigg(\frac{\bup[\ii]-\tiln[\ii]}{2}\Bigg)\Bigg\|_\infty \\
            &= \Bigg\|\frac{\bup}{2^{\calI-i+1}} 
            - \frac{\bup[\ii]}{2} 
            + \frac{\tiln[\ii]}{2}\Bigg\|_\infty \\
            &= \bigg\|\frac{1}{2} \cdot \frac{\bup}{2^{\calI-i}} 
            - \frac{1}{2} \cdot \bup[\ii] 
            + \frac{1}{2} \cdot \tiln[\ii] \bigg\|_\infty.
        \end{align*}
        The triangle inequality implies
        \begin{align*}
            \bigg\|\frac{\bup}{2^{\calI-(i-1)}} - \bup[(i-1)]\bigg\|_\infty
            &\leq \bigg\|\frac{1}{2} \cdot \frac{\bup}{2^{\calI-i}} 
            - \frac{1}{2} \cdot \bup[\ii] \bigg\|_\infty 
            + \bigg\| \frac{1}{2} \cdot \tiln[\ii] \bigg\|_\infty \\
            &= \frac{1}{2} \cdot  \bigg\|\frac{\bup}{2^{\calI-i}} 
            - \bup[\ii] \bigg\|_\infty 
            + \frac{1}{2} \cdot \bigg\| \tiln[\ii] \bigg\|_\infty.
        \end{align*}
        With assumption \eqref{eq:properties IV} and \cref{lem:jobbound}, we get 
        \begin{align*}
            \bigg\|\frac{\bup}{2^{\calI-(i-1)}} - \bup[(i-1)]\bigg\|_\infty
            &\leq \frac{1}{2} \cdot D + \frac{1}{2} \cdot D = D.
        \end{align*}
\subparagraph*{Property 2:} Let $b^\powIterTwo=(\bup[(2)],\bdown[(2)])^T$ and assume that $\calA x^\powIterTwo=b^\powIterTwo$ is feasible.
        Then it is possible to assign feasible configurations to the remaining machines.
        The definition of the iterations and their numbering imply that $\bdown[(1)]_k \leq \maxSupp$ for all machine-types $k.$
        With \eqref{eq:define tilde xi}, we get $\tilx^\powIterOne = x^\powIterOne$ and $x^{(0)} = \nicefrac{(x^\powIterOne - \tilx^\powIterOne)}{2} = \nullvec.$
        Therefore, $\bup[(0)] = \nullvec$ and $\tiln[(1)] = \bup[(1)].$ 
        Since $\calA x^\powIterTwo = b^\powIterTwo$ is feasible, there are at most $\maxSupp$ machines for each type, and with \cref{lem:jobbound}, it holds that $\big\|\bup[(1)]\big\|_\infty \leq D.$
\end{proof}

The way we calculate the $\bdown[\ii]_k$ with \autoref{alg:m_k^(i)} shows how many iterations we need for each block $k\in[n]$:
\IkFormula*
\begin{proof}
    Let $\bdown_k \in \ZZgeqzero, k \in [n]$ and $\calI \in \ZZgeqzero.$ 
    Assume that the current iteration $i$ is the one where the block $k$ vanishes, i.e., $\bdown[\ii]_k > 0$ and $\bdown[(i-1)]_k = 0.$ Thus, $i$ is the largest value with $\bdown[\ii]_k \leq \maxSupp.$
    This defines the number of non-zero iterations of that block:
    \begin{align}\label{eq:def i}
        i = \calI - \calI_k + 1.
    \end{align}
    Now, set 
    \begin{align}\label{eq:def I_k}
        I_k := \calI_k -1 =\calI-i.
    \end{align}
    We are now looking for the smallest number of iterations $I_k$ for which $\bdown[\ii]_k \leq \maxSupp$ is fulfilled.

    \cref{lem:m_k formula} and $\bdown[\ii]_k \leq \maxSupp$ imply
    \begin{align*}
        \bigg\lfloor \frac{\bdown_k}{2^{\calI-i}} - \maxSupp\cdot \sum_{\ell = i}^{\calI-1}\Big(\frac{1}{2^{\calI-\ell}}\Big)\bigg\rfloor &\leq \maxSupp \qquad \text{ and } \\
        \frac{\bdown_k}{2^{\calI-i}} - \maxSupp\cdot \sum_{\ell = i}^{\calI-1}\Big(\frac{1}{2^{\calI-\ell}}\Big) &< \maxSupp +1.
    \end{align*}
    By applying \eqref{eq:def i} and the definition of $I_k$ from \eqref{eq:def I_k}, we get
    \begin{align}\label{eq:ineq \maxSupp+1}
        \frac{\bdown_k}{2^{I_k}} - \maxSupp\cdot \sum_{\ell = \calI -I_k}^{\calI-1}\Big(\frac{1}{2^{\calI-\ell}}\Big) < \maxSupp +1.
    \end{align}
    We can rewrite the sum, since $\sum_{\ell = \calI -I_k}^{\calI-1}(\nicefrac{1}{2^{\calI-\ell}}) = \sum_{\ell = 1}^{I_k}(\nicefrac{1}{2^\ell})$ and use the closed form of the finite geometric series $\sum_{\ell = 1}^{I_k}(\nicefrac{1}{2^\ell}) = 1-\nicefrac{1}{2^{I_k}}.$
    We now derive the smallest number of iterations $I_k$ for which Inequality (\ref{eq:ineq \maxSupp+1}) holds:
    \begin{align*}
        &&\frac{\bdown_k}{2^{I_k}} - \maxSupp\cdot \sum_{\ell = \calI -I_k}^{\calI-1}\Big(\frac{1}{2^{\calI-\ell}}\Big) &< \maxSupp +1
        \\
        &\Longleftrightarrow
        &\frac{\bdown_k}{2^{I_k}} - \maxSupp\cdot \Big(1- \frac{1}{2^{I_k}}\Big) &< \maxSupp +1
        \\
        &\Longleftrightarrow
        &\frac{\bdown_k}{2^{I_k}} - \maxSupp + \frac{\maxSupp}{2^{I_k}} &< \maxSupp +1
        \\
        &\Longleftrightarrow
        &\frac{\bdown_k + \maxSupp}{2^{I_k}} &< 2\maxSupp +1
        \\
        &\Longleftrightarrow
        &\frac{\bdown_k + \maxSupp}{2\maxSupp+1} &< 2^{I_k}
        \\
        &\Longleftrightarrow
        &\log \bigg(\frac{\bdown_k + \maxSupp}{2\maxSupp+1} \bigg) &< I_k
    \end{align*}
    
    Thus, the smallest $I_k$ that fulfills Inequality (\ref{eq:ineq \maxSupp+1}) is 
    \begin{align*}
        I_k = 
        \begin{cases}
            \log \big(\nicefrac{\bdown_k + \maxSupp}{2\maxSupp+1} \big) + 1, &\txtif \log \big(\nicefrac{\bdown_k + \maxSupp}{2\maxSupp+1} \big) \text{ is integer}\\
            \big\lceil\log \big(\frac{\bdown_k + \maxSupp}{2\maxSupp+1} \big)\big\rceil, &\txtotherwise.
        \end{cases}
    \end{align*}

    Therefore, the number of iterations $\calI_k$ of machine-type $k$ with $\bdown[\ii]_k > 0$ is
    \begin{align*}
        \calI_k &= I_k+1 \\
        &= 
        \calIkFormula
    \end{align*} 
\end{proof}
This directly yields a closed form for the total number of iterations:
\IFormula*
\begin{proof}
    The total number of iterations $\calI$ is defined by $\calI = \calI_{\max} = \max_{k\in [\tau]} \calI_k.$
    \cref{lem:calI_k formula} directly implies 
    $$\calI_{\max} = \calIFormula$$
\end{proof}

We now provide a proof of our feasibility result:

\mainthm*
\begin{proof}
We first prove that the proposed algorithm correctly determines the feasibility of a given combinatorial \nfold ILP \eqref{eq:nfold}.

We prove this theorem in two steps. First, we show that the dynamic program (\cref{alg:dyn prog}) correctly determines the set $\tilde{N}^\powIteri$ of points $\tiln \in [D]^r$ for which $\calA \tilx^\powIteri = (\tiln[\ii],\tilm[\ii])^T$ is feasible. Then, we show that our algorithm returns $\trueCapital$ if and only if \eqref{eq:nfold} is feasible.

\paragraph*{Correctness of the Dynamic Program:}

In the first part, we show that the dynamic program is correct, assuming $n = 1$. 
In the latter part, we prove the property for $n \geq 2.$

\subparagraph*{Part 1:}
Assume $n = 1$ and let $\tilde{N}^\powIteri$ be the set of job-vectors that is returned by \cref{alg:dyn prog}.
Let $\nu \in [D]^r.$ As we have only one block, we have $\nu \in \tilde{N}^\powIteri$ if and only if $\calA \tilx^\powIteri = (\nu,\bdown[\ii])^T$ is feasible.

\subparagraph*{Part 2:}
Assume $n \geq 2$ and let $\tilde{N}$ be the set of job-vectors that is returned by \cref{alg:dyn prog}.
\\
$\Longrightarrow:$ Let $\nu \in [D]^r$ and assume $\nu \in \tilde{N}^\powIteri.$
Hence, it holds that $DT(\nu,n) = \trueCapital$.
We now have to show that $\calA \tilx^\powIteri = (\nu,\tilm[\ii])^T$ is feasible.

Note that $DT(\nu,n) = \trueCapital$ implies that there exists a partition $\nu_1,\dots, \nu_n$ of $\nu$ such that $\sum_{k=1}^n (\nu_j)= \nu$ and $DT(\nu_k,\{k\}) = \trueCapital$.
This means that the corresponding single-block ILPs are feasible for all $k \in [n]$.
Thus, there exist $\tilx_{\ell,k}^\powIteri \in \ZZgeqzero$ with $\sum_{\ell = 1}^{t_k} (A_{\ell, k}\cdot \tilx_{\ell,k}^\powIteri) = \nu_k$ and $\sum_{\ell=1}^{t_k} (\tilx_{\ell, k}^\powIteri) =  \tilm[\ii]_k,$  where $A_{\ell, k}$ is the $\ell$-th column of the $k$-th $A$-block.
Set 
$\tilx^\powIteri:=\big(\tilx_{1,1}^\powIteri, \dots, \tilx_{t_1,1}^\powIteri, \dots, \tilx_{1,n}^\powIteri, \dots, \tilx_{t_n,\tau}^\powIteri\big)^T$.
Then $\tilx^\powIteri$ is a solution to $\calA \tilx^\powIteri = (\nu,\tilm[\ii])^T$ which implies the desired feasibility.

$\Longleftarrow:$ Let $\calA \tilx^\powIteri = (\nu,\tilm[\ii])^T$ be feasible.
This means that there exists a solution vector $\tilx^\powIteri = (\tilx_1^\powIteri, \dots, \tilx_n^\powIteri)^T$ with $\tilx_k^\powIteri \in \ZZgeqzero^{t_k}$.
This implies that there exists a partitioning $\nu_1,\dots, \nu_n$ of $\nu$ with $\sum_{k=1}^n (\nu_k) = \nu$.
This partitioning is equivalent to $\tilx^\powIteri.$
Note that $\tilx_k^\powIteri$ solves $A_k \tilx_k^\powIteri = \nu$ with $\|\tilx_k^\powIteri\|_1 = \bdown[\ii]_k.$
Hence, $DT(\nu_k,\{k\}) = \trueCapital$ for all $k \in [n].$
The inductive step of the dynamic program (\cref{alg:dyn prog}) implies that
\begin{align*}
    DT(\nu,n) = \bigwedge_{k \in [n]} DT(\nu_k,k)= \trueCapital.
\end{align*}
Therefore, $\nu \in \tilde{N}^\powIteri.$

\paragraph*{Correctness of the Algorithm:}
Let the \nfold ILP \eqref{eq:nfold} be given. Note that while the lower RHS $\bdown[\ii]$ are unique to their iteration $i$ and pre-computed using \autoref{alg:m_k^(i)}, the upper RHS $\bup[\ii]$ are not. For a single iteration, there may be several feasible points (upper RHS) in the set $N^\powIteri$. 

To prove the correctness of our algorithm, we show that the following four properties hold:
\begin{enumerate}
    \item
    A point $\bup[(1)] \in \ZZgeqzero^r$ with $\big\|\bup[(1)] \big\|_\infty \leq D$ is in the set $N^\powIterOne$ if and only if $\calA x^\powIterOne = (\bup[(1)],\bdown[(1)])^T$ is feasible.
    \item
    Let the set $N^\powIterOne$ of feasible points $\bup[(1)] \in \ZZgeqzero^r$ with $\big\|\bup[(1)] \big\|_\infty \leq D$ be given. 
    Then a point $\bup[(2)] \in \ZZgeqzero^r$ with $\big\|\nicefrac{\bup}{2^{\calI-2}} - \bup[(2)]\big\|_\infty \leq D$ is in the set $N^\powIterTwo$ if and only if $\calA x^\powIterTwo = (\bup[(2)],\bdown[(2)])^T$ is feasible.
    \item
    Let $i \in \{3,\dots, \calI\}.$ Let the set $N^\powIteriminOne$ of feasible points $\bup[(i-1)] \in \ZZgeqzero^r$ with
    \begin{align*}
        \Big\|\frac{\bup}{2^{\calI-(i-1)}} - \bup[(i-1)] \Big\|_\infty \leq D
    \end{align*}
    be given. Then a point $\bup[\ii] \in \ZZgeqzero^r$ with $\big\|\nicefrac{\bup}{2^{\calI-i}} - \bup[\ii] \big\|_\infty \leq D$ is in the set $N^\powIteri$ if and only if $\calA x^\powIteri = (\bup[\ii],\bdown[\ii])^T$ is feasible.
    \item
    The given point $n$ is in the set $N^\powIterCalI$ if and only if $\calA x = (\bup,\bdown)^T$ is feasible.
\end{enumerate}

\subparagraph*{Property 1:} 
This follows from the correctness of the dynamic program and the fact that $\bup[(1)] = \tiln[(1)].$

\subparagraph*{Property 2:}
Let $N^\powIterOne$ be the set of feasible points $\bup[(1)] \in \ZZgeqzero^r$ with $\big\|\bup[(1)] \big\|_\infty \leq D.$
This assumption holds due to the first property of this proof.

Let $\bup[(1)] \in N^\powIterOne.$ 
Since $\bup[(1)]$ is feasible, there exists a solution vector $x^\powIterOne$ with $\calA x^\powIterOne = \tilde{b}^\powIterOne.$

$"\Longrightarrow":$ Let $\bup[(1)] \in \ZZgeqzero^r$ with $\big\|\nicefrac{\bup}{2^{(\calI-2)}} - \bup[(2)] \big\|_\infty \leq D$ and assume $\bup[(2)] \in N^\powIterTwo.$
This implies that there exists $\bup[(1)] \in N^\powIterOne$ and an $\tiln[(2)] \in \tilde{N}^\powIterTwo$ such that $\bup[(2)] = 2\bup[(1)] + \tiln[(2)].$
The correctness of the dynamic program and the existence of $\tiln[(2)]$ imply that $\calA \tilx^\powIterTwo = \tilde{b}^\powIterTwo$ is feasible.
Set $x^\powIterTwo := 2x^\powIterOne + \tilx^\powIterTwo$ and $b^\powIterTwo := 2b^\powIterOne + \tilde{b}^\powIterTwo.$ 
It follows that $\calA x^\powIterTwo = b^\powIterTwo$ and hence, which implies the desired feasibility.

$"\Longleftarrow":$ Assume $\calA x^\powIterTwo = (\bup[(2)],\bdown[(2)])^T$ with $\|\bup[(2)]\|_\infty \le D$ holds.
Property 2 of \cref{lem:properties norm n leq D} now states that there exists an $x^\powIterOne$ with $\calA x^\powIterOne = \big(\bup[(1)], \bdown[(1)]\big)^T$ 
and $\big\| \bup[(1)] \big\|_\infty \leq D.$

Set $\hat{N}^\powIterTwo := \big\{\hatn[(2)]\in \ZZgeqzero^r \mid \hatn[(2)] = 2\bup[(1)], \bup[(1)] \in N^\powIterOne\big\}.$
With the correctness of the dynamic program, we know that $\tilde{N}^\powIterTwo$ is the set of all feasible points with $\big\|\tiln[(2)]\big\|_\infty \leq D$ and $\tilm[(2)] = \bdown[(2)] - 2\bdown[(1)].$
We now combine all $\hatn[(2)] \in \hat{N}^\powIterTwo$ with all $\tiln[(2)] \in \tilde{N}^\powIterTwo$ and only keep the $\bup[(2)] \in \ZZgeqzero^r$ with $\big\|\nicefrac{\bup}{2^{\calI-2}} - \bup[(2)]\big\|_\infty \leq D$ in the set 
\begin{align*}
    N^\powIterTwo = \big\{\bup[(2)] \in \ZZgeqzero^r \mid \bup[(2)] = \hatn[(2)] + \tiln[(2)], \hatn[(2)] \in \hat{N}^\powIterTwo, \tiln[(2)] \in \tilde{N}^\powIterTwo,\\
    \big\|\nicefrac{\bup}{2^{\calI-2}} - \bup[(2)]\big\|_\infty \leq D \big\}.
\end{align*}
This is the set of all feasible points near $\nicefrac{\bup}{2^{\calI-2}}.$ 
Since $\calA x^\powIterTwo = (\bup[(2)],\bdown[(2)])^T$ is feasible, it follows that $\bup[(2)] \in N^\powIterTwo.$

\subparagraph*{Property 3:}
Let $i \in \{3,\dots, \calI\}$ and assume the set $N^\powIteriminOne$ of feasible points $\bup[(i-1)] \in \ZZgeqzero^r$ with 
\begin{align*}
    \bigg\|\frac{\bup}{2^{\calI-(i-1)}} - \bup[(i-1)]\bigg\|_\infty \leq D
\end{align*}
is given.
Note that there exists a solution vector $x^\powIteriminOne$ with 
$\calA x^\powIteriminOne = b^\powIteriminOne$ for each $\bup[(i-1)] \in N^\powIteriminOne$.

$"\Longrightarrow":$ Let $\bup[\ii] \in \ZZgeqzero^r$ with $\big\|\nicefrac{\bup}{2^{(\calI-i)}} - \bup[\ii]\big\|_\infty \leq D$ and assume $\bup[\ii] \in N^\powIteri.$
This means that there exist some $\bup[(i-1)] \in N^\powIteriminOne$ and $\tiln[\ii] \in \tilde{N}^\powIteri$ such that $\bup[\ii] = 2\bup[(i-1)] + \tiln[\ii].$
The correctness of the dynamic program and the existence of $\tiln[\ii]i$ imply that there exists $\tilx^\powIteri$ with $\calA \tilx[\ii] = \tilde{b}^\powIteri.$

Set $x^\powIteri := 2x^\powIteriminOne + \tilx^\powIteri$ and $b^\powIteri := 2b^\powIteriminOne + \tilde{b}^\powIteri.$ 
It follows that $\calA x^\powIteri = b^\powIteri$ which implies the desired feasibility.

$"\Longleftarrow":$ Assume there exists $x^\powIteri \in \ZZgeqzero^r$ such that $\calA x^\powIteri = b^\powIteri$ with $\|\bup[\ii]\|_\infty \le D$ holds.
Property 1 of \cref{lem:properties norm n leq D} now implies that there exists $x^\powIteriminOne$ with $\calA x^\powIteriminOne = b^\powIteriminOne$ and 
\begin{align*}
    \bigg\|\frac{\bup}{2^{\calI-(i-1)}} - \bup[(i-1)]\bigg\|_\infty \leq D.
\end{align*}
Set $\hat{N}^\powIteri := \big\{\hatn[\ii] \in \ZZgeqzero^r \mid \hatn[\ii] = 2\bup[(i-1)], \bup[(i-1)] \in N^\powIteriminOne\big\}.$
With the correctness of the dynamic program, we know that $\tilde{N}^\powIteri$ is the set of all feasible points with $\big\|\tiln[\ii]\big\|_\infty \leq D$ and $\tilm[\ii] = \bdown[\ii] - 2\bdown[(i-1)].$
We now combine all $\hatn[\ii] \in \hat{N}^\powIteri$ with all $\tiln[\ii] \in \tilde{N}^\powIteri$ and only keep the $\bup[\ii] \in \ZZgeqzero^r$ with $\big\|\nicefrac{\bup}{2^{\calI-i}} - \bup[\ii]\big\|_\infty \leq D$ in the set 
\begin{align*}
    N^\powIteri = \big\{\bup[\ii] \in \ZZgeqzero^r \mid \bup[\ii] = \hatn[\ii] + \tiln[\ii], \hatn[\ii] \in \hat{N}^\powIteri, \tiln[\ii] \in \tilde{N}^\powIteri, \\
    \big\|\nicefrac{\bup}{2^{\calI-i}} - \bup[\ii]\big\|_\infty \leq D\big\}.
\end{align*}
This is the set of all feasible points near $\nicefrac{\bup}{2^{\calI-i}}.$ 
Since $\calA x^\powIteri = b^\powIteri$ is feasible, it follows that $\bup[\ii] \in N^\powIteri.$
    
\subparagraph*{Property 4:} 
Set $\bup[(\calI)] := \bup$ and $\bdown[(\calI)] := \bdown.$
This implies 
\begin{align*}
    \Big\|\frac{\bup}{2^{(\calI-\calI)}} - \bup[(\calI)]\Big\|_\infty = \|\bup - \bup\|_\infty= 0\leq D.
\end{align*}

The third property of this proof states that $\bup[\ii] \in \ZZgeqzero^r$ with $\big\|\nicefrac{\bup}{2^{(\calI-i)}} - \bup[\ii]\big\|_\infty \leq D$ is in the set $N^\powIteri$ if and only if $\calA x^\powIteri = b^\powIteri$ is feasible.
With $\big\|\nicefrac{\bup}{2^{(\calI-\calI)}} - \bup[(\calI)]\big\|_\infty \leq D$, we know that this holds also for $\bup[(\calI)].$ 
Thus, $\bup \in N^\powIterCalI$  if and only if $\calA x = b$ is feasible.

\paragraph*{Running Time}
The running time of the preprocessing is dominated by the dynamic program. 
In the worst case, it is called $\Oh(n)$ times.
Creating the base table (BT) of the DP, we have to solve $n (K\Delta+1)^r$ small sub-problems, each having a running time of \eqref{eq:pcmax rt}.
The dynamic program then constructs a dynamic table (DT) with $n(D+1)^r$ entries. Each can be calculated by doing a boolean convolution using FFT which takes time $\Oh(n(D+1)^r \cdot \log(n(D+1)^r))$ (see e.g.~\cite{BN21}).
With $t_k \le h$ for all $k\in [n]$, the total running time for the preprocessing amounts to
\begin{align*}
    &\underbrace{n (K\Delta+1)^r \cdot O(\sqrt{r+1} \Delta)^{(1+o(1))(r+1)} + O(h(r+1))}_\text{Base Case} + \underbrace{n^2 (D+1)^\Or \cdot \log(n (D+1)^r)}_\text{Inductive Step}\\
    &= (D+1)^\Or
\end{align*}

The number of iterations, we need to combine the solutions is generally bounded by $\log(\bdown_{\max})$. However, note that if the upper RHS is relatively small in all entries, it becomes $\nullvec$ before $\bdown$ does. In order to be feasible in that case, we only need to check whether a $\nullvec$ in each $A_i$ where the corresponding entry in the lower RHS is not 0, exists. Thus, these sub-problems are trivial to solve and therefore, we may say that we need $\log(b_{\text{def}})$ iterations, where $b_{\text{def}} := \min(\bup_{\max},\bdown_{\max})$. Each set $N^\powIteri, \tilde{N}^\powIteri$ and $\hat{N}^\powIteri$ contains at most $(D+1)^r$ vectors. Therefore, determining each set is possible in time $D^\Or.$
With $h \le (\Delta+1)^r$ this yields to a total running time of $\RTfeasibilityO$.
\end{proof}

\section{Applications \-- Omitted details}\label{sec:appendix-sched}
In this appendix, we elaborate how our algorithm can be applied to scheduling on uniform machines, with respect to makespan minimization and santa claus, the \closeststring problem and the \imbalance problem.

\subsection{Scheduling on Uniform Machines}\label{sec:scheduling}

In this section, we show how to use our combinatorial \nfold approach to solve the problem of scheduling jobs on uniform machines with the objective of minimizing the maximum completion time or maximizing the minimum completion time, i.e., $\QCmax$ and $\QCmin$ respectively. We begin by summarizing a novel approach by Rohwedder to solve $\QCmax$ given in~\cite{rohwedder2025}.

\paragraph*{Makespan Minimization}
\label{parac:QCmax}
 His core idea is simple. He shows that you can separate the scheduling problem into two sets of machines, small ($S$) and big ($B$). For a given makespan guess $T$ and a machine-type $k$, we call $T_k=\lfloor s_kT\rfloor$ the guessed load of that type. With a slight abuse of notation, we utilize $T_k$ as the guessed load of a machine $m^{(k)}$ as well.
Small machines have a small guessed load, i.e., $T_k\leq \pmax^4\forall m^{(k)}\in S.$ This directly yields small configuration sizes for small machines, and, relevant for our approach to solve the corresponding ILP, a largest entry of $\Delta\leq \pmaxOne.$ The elegance lies in the way Rohwedder approaches big machines. He shows that there always exists a pivot size $a\in \{p_1,\dots,p_d\}$, such that jobs of this size are present in large quantities on big machines. 
For that pivot size, he shows that an optimal schedule can be generated in two phases. First, we place a small configuration on each machine. Machines are assigned configurations based on the parity of their load. A machine with guessed load $T_k$ will only get assigned configurations $c$ with congruent size, i.e., $T_k\mod a \equiv p^Tc \mod a.$ In phase two, after each machine is assigned a configuration, we can place the remaining jobs in a greedy manner to complete the schedule. For this procedure to work, it is required that the jobs not placed in phase one are either of size $a$ or, for other sizes, are present in multiples of $a$. Because configurations of congruent sizes are treated alike in phase one, we only need to consider configurations that place up to $(a-1)$ copies of a job. Furthermore, as machine-types of congruent sizes are treated alike in phase one, we only have $a$ different types of big machines.
We refer the reader to~\cite{rohwedder2025} for the whole procedure, but go into more detail when discussing our approach of solving $\QCmin.$ As a direct consequence, the configurations for both big and small machines have their entries bounded by $\pmaxOne.$ Also, the number of distinct machine-types $\tau$ is bounded by $\pmax^4+a$ as we consider at most $\pmax^4$ types of small machines and $a \le \pmax$ types of big machines. Thus, we can use the algorithm given in \cref{sec:algorithm} to solve the configuration ILP in time $\RTsched\leq \RTqcmaxO$ as we have $\log(m_{\text{def}})\le |I|$ where $m_{\text{def}}:=\min(n_{\max},m_{\max})$. Then, extending this initial solution to admit all jobs can be done with a greedy procedure detailed in~\cite[Section 2.1]{rohwedder2025}, and thus requires only time polynomial in $N$ to complete. The guessed makespan can be computed via a simple binary search framework, yielding only logarithmic overhead. For details on that procedure, we refer to the next section. In total, we have the following corollary.
\begin{corollary}
    $\QCmax$ can be solved in $(\RTsched + O(N))d\log(\tau_I N\pmax) = \RTqcmaxO$.
\end{corollary}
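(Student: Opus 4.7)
The plan is to combine Rohwedder's structural decomposition of $\QCmax$ with our $n$-fold algorithm from~\cref{thm:opt} and a binary search over the makespan. First, for a guessed makespan $T$, I would invoke the small/big split of~\cite{rohwedder2025}: small machines satisfy the guessed load $T_k = \lfloor s_k T \rfloor \le \pmax^4$, while big machines share a pivot size $a \in \{p_1,\dots,p_d\}$ for which a two-phase construction exists. Phase one assigns every machine a configuration whose total size satisfies $p^Tc \equiv T_k \pmod a$, using only configurations that place at most $a-1$ copies of any non-pivot job size. Phase two distributes the remaining jobs (all of size $a$ or present in multiples of $a$) greedily as described in~\cite[Section 2.1]{rohwedder2025}.

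I would then encode phase one as a combinatorial $n$-fold ILP of the form~\eqref{eq:nfold}: one block per machine-type, with the upper rows enforcing global job demands and each lower row $\onevec^T$ fixing the number of configurations assigned to that machine-type. After collapsing big machines of congruent residue modulo $a$, the number of distinct machine-types is at most $\pmax^4 + a \le \pmax^{O(1)}$, so $r \le \pmaxOne$; the truncation argument of~\cite{rohwedder2025} bounds the entries of every admissible configuration, hence $\Delta \le \pmaxOne$. Applying~\cref{thm:opt} therefore solves the configuration ILP in time $\RTsched$, after which the greedy completion adds only $O(N)$ time.

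Finally, I would locate the optimal makespan by binary search over $T$. Because the objective value changes only at $O(\tau_I N \pmax)$ breakpoints (speed values times integer loads), only $O(d \log(\tau_I N \pmax))$ guesses are required; each guess costs $\RTsched + O(N)$, giving a total running time of $(\RTsched + O(N)) \cdot d\log(\tau_I N \pmax)$. Using $d \le \pmax$, $\log(m_{\text{def}}) \le |I|^{O(1)}$, and $N,\tau_I \le |I|$, the whole expression collapses into $\RTqcmaxO$.

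The main obstacle I anticipate is the parameter bookkeeping: confirming that after collapsing big machine-types by residue and bounding configurations through the pivot argument we really obtain $r,\Delta \le \pmaxOne$ uniformly in the guess $T$, and that every overhead factor (the $\log m_{\text{def}}$ from \cref{thm:opt}, the binary-search depth, and the dependence on $d$) is absorbed into the polynomial factor $|I|^{O(1)}$ of the target running time.
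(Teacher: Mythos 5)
Your overall plan matches the paper's: use Rohwedder's small/big machine decomposition, encode the phase-one configuration assignment as a combinatorial \nfold ILP, solve it with \cref{thm:opt}, complete greedily in $O(N)$ time, and wrap everything in a binary search over the makespan. Two points of the running-time bookkeeping are off, though. First, the binary search contributes only $O(\log(\tau_I N \pmax))$ iterations, not $O(d\log(\tau_I N\pmax))$: the optimal makespan has the form $K/s_k$ with $K\in\{0,\dots,N\pmax\}$ and $k\in[\tau_I]$, so there are $O(\tau_I N\pmax)$ candidate values and hence logarithmically many search steps. The extra factor of $d$ in the claimed bound comes from \emph{guessing the pivot size} $a\in\{p_1,\dots,p_d\}$. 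Your write-up states that big machines ``share a pivot size $a$'' as if it were known, but Rohwedder's lemma only guarantees existence of a good pivot --- it is not identifiable in advance, so all $d$ candidates must be enumerated for each makespan guess. Without this enumeration you have no justification for the $d$ factor you write down (and the algorithm as described is incorrect, since it cannot know which pivot to use).

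Second, a smaller slip: you derive $r\le\pmaxOne$ from the machine-type count $\pmax^4+a$, but in the configuration ILP the machine-types index the \emph{blocks}, which bounds $n$ (the number of blocks in \eqref{eq:nfold}), whereas $r$ is the number of global rows, i.e.\ the number of job-types (up to a constant), so $r\le d+O(1)\le\pmax+O(1)$. Both $n$ and $r$ are $\pmaxOne$, so the conclusion and the resulting $\RTsched$ bound stand, but the attribution is reversed.
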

We note that Rohwedder achieves a running time of $(d\pmax)^{O(d)}(h+M)^{O(1)}$, where $h$ is the total number of columns in the configuration ILP, to decide each guessed makespan in a similar binary search framework to ours.
Next, we show how to extend this idea to also solve $\QCmin.$ In the process, we detail some steps given by Rohwedder in~\cite{rohwedder2025} for $\QCmax.$

\paragraph*{Santa Claus}
In this section, we show how to adapt the techniques given in~\cite{rohwedder2025} combined with our combinatorial \nfold algorithm to solve $\QCmin$ in time $\RTqcmaxO$.

\noindent\textbf{Bounding the Configurations }
As a first step, we conduct a binary search on the optimal minimum completion time $\Cmin$. Given a guess $T\in \mathbb{Q}_{\ge 0}$ we determine whether there is an assignment of jobs onto machines such that for the load of each machine $\sum_{j:\sigma(j)=k}p_j=L_k\ge T_k= \lceil s_kT \rceil$ holds. The optimal value has the form $\nicefrac{K}{s_k}$ for some $K\in [N\pmax]_0$ and some $k\in [\tau_I].$ Thus, a binary search takes $O(\log(\tau_I N\pmax))$ time, which is polynomial in the input. For any given guess $T$, we then compute the guessed loads $T_i=\lceil s_iT\rceil$ each machine-type must at least cover. Then, we create $\ell$ dummy jobs of size $-1$ such that we can exactly cover each machine, i.e., $\ell=\sum_{j\in [d]}n_jp_j-\sum_{k\in [\tau_I]} m_kT_k.$ We require this exact amount of dummy jobs to augment the configurations in the final steps of the algorithm. By adding the dummy jobs, we also add a new job-type $d+1$ with multiplicity $n_{d+1} = \ell$ and processing time $p_{d+1} = -1$. 
As a next step, we generate the required configurations to solve this scheduling instance as a combinatorial \nfold ILP. For this, the following observation is required. This can be seen by an exchange argument (\cref{sec:appendix-sched}).
\begin{restatable}{lemma}{cminload}\label{lem:Cminload}
    Let $I$ be an instance of $\QCmin$ and $OPT(I)$ be the optimal minimum completion time. Then, each machine-type $k\in[\tau_I]$ has a load of $\lceil OPT(I)s_k\rceil\leq L_k\leq OPT(I)s_k+\pmax$. 
\end{restatable}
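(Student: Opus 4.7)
The plan is to treat the two inequalities independently. The lower bound is immediate from the definition of $\opt(I)$: in any optimal schedule the minimum completion time equals $\opt(I)$, so every machine of type $k$ satisfies $L_k/s_k \geq \opt(I)$, and since $L_k$ is a non-negative integer (a sum of integer processing times) this forces $L_k \geq \lceil \opt(I)\, s_k \rceil$.

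For the upper bound I would use a local exchange argument, showing that every optimal schedule can be transformed, without losing optimality, into one that additionally satisfies $L_k \leq \opt(I)\, s_k + \pmax$ on every machine. Fix any optimal schedule $\sigma$ and define the potential $\Phi(\sigma) := \sum_{m} \max\{0,\, L_m - \opt(I)\, s_{\mathrm{type}(m)} - \pmax\}$. As long as $\Phi(\sigma) > 0$, I will pick an offending machine $m$ of type $k$ with $L_m > \opt(I)\, s_k + \pmax$ together with any job $j$ assigned to $m$, and transfer $j$ to a machine $m^\star$ attaining the minimum completion time, i.e.\ $C_{m^\star} = \opt(I)$. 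Such an $m^\star$ exists by optimality, and combined with integrality of $L_{m^\star}$ the equation $L_{m^\star}/s_{k^\star} = \opt(I)$ forces $L_{m^\star} = \opt(I)\, s_{k^\star}$ exactly, so $m^\star$ has zero excess. After the transfer, $m$'s load is $L_m - p_j > \opt(I)\, s_k$ (hence $C_m > \opt(I)$) and $m^\star$'s load is at most $\opt(I)\, s_{k^\star} + \pmax$ (hence $m^\star$ stays under threshold); the resulting schedule is still optimal and $\Phi$ strictly drops by at least $p_j \geq 1$, so the procedure terminates with $\Phi = 0$.

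The main technical subtlety is ensuring that the destination machine $m^\star$ does not itself become over-threshold, which would stall the potential. Routing the moved job specifically to a zero-excess machine resolves this: since $p_j \leq \pmax$ and $L_{m^\star} = \opt(I)\, s_{k^\star}$, the new load on $m^\star$ is at most $\opt(I)\, s_{k^\star} + \pmax$, so it stays inside the allowed band. Existence of such a zero-excess machine in every optimal schedule follows from $\opt(I) = \max_\sigma \min_m C_m$ being realized as $L_{m^\star}/s_{k^\star}$ for some integer $L_{m^\star}$; after the exchange, the new schedule still has minimum completion time $\geq \opt(I)$, and by optimality of $\opt(I)$ the minimum is exactly $\opt(I)$, so a (possibly different) zero-excess machine is available for the next iteration.
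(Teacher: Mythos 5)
Your proof is correct and follows essentially the same exchange argument as the paper: move a job from an over-threshold machine to one attaining the minimum completion time, and observe that the minimum can only stay at $OPT(I)$. Your potential-function argument makes rigorous the paper's informal claim that ``this procedure can then be iterated,'' and your observation that the receiving machine has load exactly $OPT(I)\, s_{k^\star}$ (hence stays within the band after receiving a job of size $\leq \pmax$) addresses a termination subtlety the paper leaves implicit.
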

\begin{proof}
    The proof follows by contradiction. Assume there is an optimal solution $OPT$ such that there is a machine $m^{(k)}$ with load $L_k>OPT(I)\cdot s_k+\pmax.$ Then, we can remove any one job placed on $m^{(k)}$ and place it on the machine with the smallest completion time. As we only remove a single job, the load of $m^{(k)}$ does not fall below $OPT(I)$. Furthermore, the machine on which the removed job is placed only has its load increased, so $OPT'(I)\ge OPT(I)$ holds. This procedure can then be iterated for all machines with a load that is too large. 
\end{proof}

 We split the machines into two sets, small ($S$) and big ($B$). A machine $m^{(k)}$ is small if $T_k\leq \pmax^4$, i.e., its guessed load is small. Because of \cref{lem:Cminload}, we know that each (small) machine has at most $\pmax$ many dummy jobs assigned to it. The remaining machines are considered big. We construct the configurations depending on whether the corresponding machines are small or big.

For small machines, since their load is small, each entry of the configuration is bounded by $\pmax^4$, except for the dummy jobs, which are bounded by $\pmax.$ Thus, a small machine $m^{(k)} \in S$ has the set of configurations $\mathcal{C}_k=\{ c\in \mathbb{Z}^{d+1}_{\geq 0}| p^tc=T_k, c_i\leq \pmax^4\forall i\in [d], c_{d+1}\leq \pmax\}.$ Let $\mathcal{C}_S=\{\mathcal{C}_k|m^{(k)}\in S\}$ be the set of all small configurations.
For the big machines, we first guess a pivot size $a \in \{p_1,\dots,\pmax\}$. This pivot size is chosen such that jobs of that size are very common on big machines, i.e., there are at least $\pmax^2|B|$ many such jobs present. The following Lemma shows that such a size always exists.

\begin{restatable}[\cite{rohwedder2025}]{lemma}{ajobs}\label{lem:a-jobs}
    Let $I$ be an instance of $\QCmin$ and $\sigma^*$ be an optimal schedule for $I$ and let $B$ be the set of machines with load larger than $\pmax^4$ in $\sigma^*$. Then there exists a job-type $j$ such that at least $\pmax^2|B|$ jobs of size $p_j$ are scheduled on those machines.
\end{restatable}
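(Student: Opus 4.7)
The plan is to argue by a direct double-counting / pigeonhole argument on the jobs scheduled on the machines in $B$ by $\sigma^*$. The scheme is first to lower bound the total number of jobs assigned to $B$, and then to pigeonhole this count across the (few) distinct job-sizes.

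First I would fix the optimal schedule $\sigma^*$ and the set $B$ of machines whose load under $\sigma^*$ is strictly greater than $\pmax^4$. Write $L_k$ for the load of machine $m^{(k)} \in B$ and set
\[
N_B := \sum_{k \in B} \big|\{ j : \sigma^*(j) = m^{(k)} \}\big|,
\]
i.e.\ $N_B$ counts the total number of jobs that $\sigma^*$ assigns to machines in $B$. Note that $\sigma^*$ is a schedule of the original instance, so only genuine (positive-size) jobs are being counted; the dummy jobs of size $-1$ are an artefact of the configuration ILP and do not appear here.

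Next I would exploit the defining load bound on $B$. By definition every $m^{(k)} \in B$ satisfies $L_k > \pmax^4$, so
\[
\sum_{k \in B} L_k \;>\; \pmax^4 \cdot |B|.
\]
Since each single job contributes at most $\pmax$ to the load of the machine it is placed on, the reverse estimate $\sum_{k \in B} L_k \leq \pmax \cdot N_B$ also holds. Chaining the two inequalities gives $N_B > \pmax^3 \cdot |B|$.

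Finally I would pigeonhole over the job-types. Because the processing times are positive integers bounded by $\pmax$, the number of distinct job-types satisfies $d \leq \pmax$. Hence some job-type $j \in [d]$ carries at least
\[
\frac{N_B}{d} \;>\; \frac{\pmax^3 \cdot |B|}{\pmax} \;=\; \pmax^2 \cdot |B|
\]
of its jobs on machines in $B$, which is exactly the claim. The main (only) subtlety I would flag is the need to separate the original jobs from the dummy jobs introduced later in the reduction to the configuration ILP; beyond that, no real obstacle arises, as the argument is pure counting.
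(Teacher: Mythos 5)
Your proof is correct and uses the same pigeonhole argument as the paper: lower-bound the total number of jobs on $B$ via the load bound $\pmax^4$ and the per-job size bound $\pmax$ to get at least $\pmax^3|B|$ jobs, then distribute over the at most $d \le \pmax$ job-types. The extra remark about excluding dummy jobs is a sensible clarification but does not change the argument.
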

\begin{proof}
    The argument is a simple pigeonhole: The load of big machines is at least $\pmax^4,$ so there must be at least $\pmax^3$ jobs on those machines. In total there are at least $\pmax^3 |B|$ jobs on big machines. As there are at most $d\le \pmax$ job sizes, one such size must be present at least $\pmax^2 |B|$ times.
\end{proof}

Using this pivot size $a$ we now compute the parity of each machine size $k$, i.e., $T_k \mod a.$ For each parity value, we construct only configurations with a total load of that parity. Furthermore, we know that each job size is only ever present $(a-1)$ times, as there would be a configuration of equivalent parity otherwise. Finally, the number of jobs $j$ with $p_j=a$ is $0$ for all configurations by the same argument. Thus, the set of configurations of a parity-type $k\in [a-1]_0$ is $\mathcal{C}_k=\{c\in \ZZgeqzero^{d+1}\mid p^Tc\mod a =k, c_{\max}<a\}.$ Let $\mathcal{C}_B=\{\mathcal{C}_k|m^{(k)}\in B\}$ be the set of configurations for big machines.
Note that these configurations do not fill up the big machines entirely. However, due of the construction of our dummy jobs, we ensure that the jobs that remain unplaced by these configurations have a total size that is a multiple of $a.$ This can be seen because small machines are filled exactly, and big machines are filled such that a load that is a multiple of $a$ is missing on each machine. As we have $\sum_{j\in [d]}n_jp_j-\ell=\sum_{k\in [\tau_I]} m_kT_k,$ i.e., the total size of jobs and loads is identical with the addition of dummy jobs, the total size of unscheduled jobs is a multiple of $a$. Note that we need to take care when constructing our ILP that the remaining unscheduled jobs can be placed in a greedy manner. We achieve this through the construction of our slack variables.

\paragraph*{Building and Solving the Configuration ILP}
Knowing that the size of feasible configurations is bounded, we build the \textit{configuration ILP} as in three steps:
\begin{enumerate}
    \item For small machines, the entries in $\mathcal{C}_S$ form the blocks of configurations. More precisely, let $\tau_S := |\mathcal{C}_S|$ be the number of small machine-types, then define matrices $A_i$ such that the vectors in the $i$-th set of $\mathcal{C}_S$ are the columns in $A_i$ with $i\in[\tau_S]$.
    \item For big machines, the entries in $\mathcal{C}_B$ form the blocks of configurations. Set $\tau_B := |\mathcal{C}_B| \le a$. Then we define the matrices $A_{\tau_S + i}$ analogously to 1.\ such that the vectors in the $i$-th set of $\mathcal{C}_B$ are the columns in $A_{\tau_S + i}$ with $i\in[\tau_B]$. Let $\tau:=\tau_S+\tau_B$ be the number of blocks.
    \item The final block is a slack block which represents the yet unscheduled jobs, i.e., the removed jobs of size $a$ and the bundles of $a$ jobs of a single type. To ensure, that only jobs of size $a$ or bundles of $a$ jobs are unassigned, we define 
        \begin{align*}
            A_{\tau+1} := \begin{pmatrix}
            \alpha_1 &0 & \dots & 0 & 0\\
            0       & \alpha_2 & & \vdots&\vdots\\
            \vdots &&\ddots & 0 &0\\
            0&\dots&0&\alpha_{d+1} & 0
        \end{pmatrix} 
        \qquad\text{where }
            \alpha_j := 
            \begin{cases}
                1, & \txtif p_j = a\\
                a, & \txtother.
            \end{cases}
        \end{align*}
        We set the corresponding local constraint to $\onevec^T x_{\tau+1} = N$. Note that the $\nullvec$-column in $A_{\tau+1}$ adds a slack variable to the local constraint. Therefore, we can keep the equation while ensuring that at most $N$ jobs or bundles are unassigned. 
\end{enumerate}

\cref{lem:a-jobs} states that in an optimal schedule, there are at least $\pmax^2 \cdot |B|$ jobs of size $a$ on the big machines. Since the configurations $\mathcal{C}_S$ and $\mathcal{C}_B$ only schedule jobs of size $a$ on small machines, we define the job-vector $n'$ by
$$n_j':= \begin{cases}
                n_j - \pmax^2 \cdot |B|, & \txtif p_j = a\\
                n_j, & \txtother.
            \end{cases}$$

This construction results in the following configuration ILP: 
\begin{equation} \label{eq:configILP}
\begin{array}{rl}
    \calA x := \begin{pmatrix}
        A_1 & A_2 & \dots & A_{\tau+1} \\
        \onevec^T &&&\\
        & \onevec^T &&\\
        && \ddots &\\
        &&& \onevec^T
    \end{pmatrix} \cdot \begin{pmatrix}
        x_1 \\ x_2 \\ \vdots \\x_{\tau+1}
    \end{pmatrix} &= \begin{pmatrix}
        n'\\
        m\\
        \calN
    \end{pmatrix}\\
    x_k &\in \ZZgeqzero^{|\calC_k|} \quad \forall k\in[\tau]\,\quad
    x_{\tau+1} \in \ZZgeqzero^{d+1}
\end{array}
\end{equation}
The feasibility of the configuration ILP means: There exists an assignment such that the load of the small machines matches $T_i$ (including dummy jobs), the parity of the big machines matches $T_i \mod a$ and for each job-type, the load of the unassigned jobs is divisible by $a.$

Note that \eqref{eq:configILP} is a combinatorial \nfold ILP. Thus, we can apply our algorithm proposed in \cref{sec:algorithm} to solve it.
Regarding the running time, we plug in the following estimations. 
Since there are at most $\pmax^4$ types of small machines and at most $a$ types of big machines, we have $\pmax^4+a+1 = \pmaxOne$ blocks and each block contains at most $\pmax^{O(d)}$ columns. The largest entry in $\calA$ is the largest entry in the configurations, which is bounded by $\pmax^4$.
Since the largest entry on the RHS is $N$, \cref{thm:opt} implies that we can solve \eqref{eq:configILP} in time $(\tau d \pmax)^\Od \log(N) \le \RTqcmaxO$.

\paragraph*{Assigning Remaining Jobs/Bundles }
As a final step, we have to add the unassigned jobs of size $a$ and bundles of $a$ jobs in polynomial time.
\begin{restatable}{lemma}{bundles}
    A solution of \eqref{eq:configILP} can be greedily augmented to a solution of $\QCmin.$
\end{restatable}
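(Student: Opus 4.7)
The plan is to convert a feasible solution $x$ of \eqref{eq:configILP} into a valid $\QCmin$ schedule in two phases. In the first phase I would interpret each configuration block $A_k x_k$ as an assignment of configurations to the machines of type $k$. This immediately fills every small machine to load exactly $T_k$ when the negatively-sized dummy jobs are counted, so the real load on every small machine lies in $[T_k,T_k+\pmax]$ by \cref{lem:Cminload}. It also fills every big machine of parity-type $k$ with a configuration from $\mathcal{C}_B$ whose total mass $L_k^{\text{config}}$ is bounded by $(d+1)(a-1)\pmax \le \pmax^3 < \pmax^4 \le T_k$ and satisfies $L_k^{\text{config}} \equiv T_k \pmod a$ by construction. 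Hence the residual gap $g_k := T_k - L_k^{\text{config}}$ on every big machine is a positive multiple of $a$ strictly larger than $\pmax^4 - \pmax^3$.

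In the second phase I dispose of the unplaced real jobs, which consist of the $\pmax^2|B|$ reserved size-$a$ jobs together with, for each slack coordinate $(x_{\tau+1})_j$, either $(x_{\tau+1})_j$ individual size-$a$ jobs (when $p_j = a$) or $(x_{\tau+1})_j$ bundles of $a$ jobs of size $p_j$ (when $p_j \ne a$). By the choice of $\alpha_j$, every item in this pool has total mass that is a positive multiple of $a$. The greedy itself processes big machines in arbitrary order; for the current machine I keep placing items from the pool, unpacking bundles into individual jobs on the fly, until the load reaches at least $T_k$. Because each individual job has size at most $\pmax$ while $g_k \ge \pmax^4 - \pmax^3 \gg \pmax$, the machine reaches its target after finitely many placements and the overshoot on each machine is at most $\pmax$, matching \cref{lem:Cminload}.

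The step I expect to require the most care is the global balance accounting: verifying that the total size of the unplaced pool coincides with $\sum_{k\in B} g_k$ up to the cumulative overshoot, so that the greedy neither runs out of items mid-machine nor leaves items behind. I would derive this identity from $\ell = \sum_j n_j p_j - \sum_k m_k T_k$, from the fact that the dummy jobs placed inside the configurations contribute exactly $-\ell$ to the total configuration mass, and from the global constraints of \eqref{eq:configILP} together with the $\pmax^2|B|$-shift built into $n'_j$; in short, $\alpha_j$ was designed precisely so that every unassigned unit has mass divisible by $a$ and the totals match. Once this identity is established, greedy feasibility is immediate, and since each of the $O(N)$ items is placed in constant amortized time, the whole augmentation runs in time polynomial in the input, completing the proof.
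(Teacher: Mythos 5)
Your greedy is different from the paper's, and the difference is exactly where the gap is. The paper's augmentation never overshoots: after placing all dummy bundles on one big machine, it places each bundle of $a$ jobs on a machine whose current load leaves at least $\pmax^2$ room (guaranteed by a pigeonhole count against the $\pmax^2|B|$ withheld size-$a$ jobs), so that the residual space on every machine remains a multiple of $a$ at all times, and finally fills the remaining space exactly with size-$a$ jobs. Your greedy instead fills machines one at a time up to ``at least $T_k$'', accepting an overshoot of up to $\pmax$ per machine. That is not repairable by ``global balance accounting'': the identity $\ell = \sum_j n_j p_j - \sum_k m_k T_k$ is constructed so that the total (real plus dummy) job mass is \emph{exactly} $\sum_k m_k T_k$, i.e.\ the pool size equals the total gap with no slack. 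Consequently any positive cumulative overshoot leaves the pool short before the last machines are reached; you flag this as ``the step requiring the most care'' but the identity you plan to derive gives equality, not a surplus to absorb the overshoot. The argument does not close.

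A second, independent problem is the handling of the dummy jobs. They have processing time $-1$, and in the slack block they occur only as bundles of $a$ because $p_{d+1}=-1\neq a$ forces $\alpha_{d+1}=a$. Your termination argument ``each individual job has size at most $\pmax$, so after finitely many placements the load reaches $T_k$'' presupposes that every item in the pool has positive mass; if dummy bundles are in the pool the load can decrease, and the ``until load reaches at least $T_k$'' loop need not make progress and the $\le\pmax$ overshoot bound is not justified either. The paper avoids this by treating dummies specially: all unassigned dummy bundles are dumped onto a single arbitrary big machine before any real bundle is placed, after which all remaining pool items are positive-size and the invariant ``remaining space on every machine is a multiple of $a$'' can be maintained until size-$a$ jobs complete the schedule exactly. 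To repair your proof you would need to (i) handle dummy bundles separately (e.g.\ as in the paper), and (ii) replace the overshoot-tolerant per-machine greedy with an exact-fill argument using the multiple-of-$a$ structure; as written, the proposal does not establish the lemma.
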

\begin{proof}
    We first outline the procedure we utilize and then show its correctness.
    \begin{enumerate}
    \item Assign all yet unassigned bundles of dummy jobs to an arbitrary big machine.
    \item Assign all yet unassigned other bundles of jobs of same size greedily. 
    \item Assign all yet unassigned jobs of size $a$ greedily.
\end{enumerate} 
We know that a solution to \eqref{eq:configILP} exactly fills small machines. Furthermore, big machines are filled such that a multiple of $a$ is missing from their guessed load. As we also have created $\ell$ dummy jobs to exactly cover each machine, i.e., $\ell= \sum_{j\in [d]}n_jp_j-\sum_{k\in [\tau]} m_kT_k,$ the total size of unplaced jobs is a multiple $a.$ By the construction of our slack-block $A_{\tau+1}$ jobs of a size $\neq a$ are only left unplaced in multiples of $a.$ Thus, we can place all dummy jobs onto an arbitrary big machine. Afterwards, $ \sum_{j\in [d]}n_jp_j-\ell=\sum_{k\in [\tau]} m_kT_k$ still holds. 

Next, we place the bundles of jobs of size $\neq a.$ These are placed in bundles of $a$ jobs each, onto machines $i$ such that adding the bundle does not exceed $T_i.$ Let $z_{j}$ be the current assignment of jobs at some point during this step. Then there is a big machine $m^{(i)}\in B$ with $p^Tz_j\leq T_i-\pmax^2$. This is because \eqref{eq:configILP} does not place $\pmax^2|B|$ jobs of size $a$. Therefore, the maximum load of jobs unplaced in \eqref{eq:configILP} is $\sum_{i\in S}T_i+ \sum_{k\in B}T_k-\pmax^2 |B|.$  Thus, by pigeonhole principle, there must be a big machine $i$ with load $p^Tz_j\leq T_i-\pmax^2$ at any step of this assignment. As each bundle contains $a\leq \pmax$ jobs of size $\leq \pmax$, adding these to this machine does not exceed $T_i.$ After adding all unplaced bundles, the remaining space on each machine is a multiple of $a$, as each bundle contained exactly $a$ jobs.

In the last step we add single jobs of size $a$. As the remaining space on all machines is a multiple of $a$ and $ \sum_{j\in [d]}n_jp_j-\ell=\sum_{k\in [\tau]} m_kT_k,$ i.e., the remaining space is exactly the size of unscheduled jobs, we can place these greedily to exactly fill each machine. 
\end{proof}

Note that adding a bundle of $a$ dummy jobs clears some space on a machine because our dummy jobs have negative size. Therefore, temporarily, a machine might have negative load. However, this will be equalized by the other assignments.

After assigning all jobs, we have constructed a schedule for the guessed value of $C_{\min},$ $T$.
\begin{theorem}
    $\QCmin$ can be solved in $\RTqcmaxO$.
\end{theorem}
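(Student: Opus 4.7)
The plan is to combine the preceding machinery into a binary-search framework on the guessed minimum completion time $T$. First, I observe that every achievable value of $\Cmin$ has the form $\nicefrac{K}{s_k}$ for some $K \in \{0,1,\dots,N\pmax\}$ and $k \in [\tau_I]$, since a machine of speed $s_k$ with integer load $K$ has completion time $\nicefrac{K}{s_k}$. A binary search over these $O(\tau_I N \pmax)$ candidate values thus costs only $O(\log(\tau_I N \pmax))$ iterations, which is polynomial in $|I|$.

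For each fixed guess $T$, I would decide whether every machine $m^{(k)}$ can be assigned load at least $T_k = \lceil s_k T\rceil$. The preceding construction---adding $\ell$ dummy jobs of size $-1$ so that totals balance, partitioning the machines into small ($S$) and big ($B$), enumerating the pivot size $a \in \{p_1,\dots,p_d\}$, and restricting big-machine configurations to those with $p^Tc \bmod a = T_k \bmod a$ and $c_{\max}<a$---reduces this question to the feasibility of the configuration ILP \eqref{eq:configILP}, which is a combinatorial \nfold. Correctness is argued in both directions: if a valid schedule exists then, by \cref{lem:Cminload} and \cref{lem:a-jobs}, one can write it as a small configuration on each small machine, a parity-matched partial configuration on each big machine, and leftover $a$-bundles of unassigned jobs; conversely, any feasible ILP solution is completed by the greedy augmentation step (place dummy bundles first, then non-$a$ bundles, then singleton jobs of size $a$), which in particular preserves the lower-bound requirement thanks to the $\pmax^2|B|$ reserved $a$-jobs.

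The runtime then follows from applying \cref{thm:opt} to \eqref{eq:configILP}. The ILP has $\tau + 1 \le \pmax^4 + a + 1 = \pmaxOne$ blocks, $r = d+1$ global rows, largest coefficient $\Delta \le \pmax^4 = \pmaxOne$, and right-hand side bounded by $N$, so one feasibility test takes $(\tau d \pmax)^{\Od}\log(N) \le \pmaxD \polyI$. Multiplying by the $\Oh(\pmax)$ enumeration of the pivot size, the polynomial cost of greedy augmentation, and the $\Oh(\log(\tau_I N \pmax))$ binary-search overhead still leaves a total of $\RTqcmaxO$.

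The main obstacle is making the reverse direction of the ILP-to-schedule equivalence watertight. In particular, the slack block $A_{\tau+1}$ must exactly capture the structure of unassigned mass---single jobs of size $a$ or bundles of $a$ jobs of any other type---so that the greedy placement succeeds without overloading any machine beyond the bound $OPT(I)s_k+\pmax$ from \cref{lem:Cminload} and without underloading any machine below $T_k$. Once this equivalence is pinned down, the runtime assembly above yields the claim.
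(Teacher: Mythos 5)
Your proposal is correct and follows essentially the same route as the paper: binary search on $T$, dummy jobs of size $-1$, the small/big machine split, guessing the pivot size $a$, the parity-restricted configuration ILP with slack block, a call to \cref{thm:opt}, and greedy augmentation of the leftover $a$-jobs and bundles. The "obstacle" you flag at the end---verifying that the slack block's structure permits the greedy placement to finish without violating the load window from \cref{lem:Cminload}---is exactly what the paper's bundle-augmentation lemma proves, using the pigeonhole argument on the $\pmax^2|B|$ reserved $a$-jobs that you already name as the key ingredient.
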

To complete the discussion of applying our techniques to the world of scheduling, we note that the algorithm presented matches the theoretical lower bound of $\pmax^{O(d^{1-\delta})}$ given in~\cite{JansenKZ25}. For the complete discussion, we defer the reader to the appendix.

 Further applications and details on \cref{cor:closest string,cor:imbalance} are provided in \cref{sec:closest string,sec:imbalance}.

\paragraph*{ETH Lower Bounds}
To complete the discussion of applying our techniques to the world of scheduling, we note that the algorithm presented matches the theoretical lower bound of $\pmax^{O(d^{1-\delta})}$ given in~\cite[Theorem 2]{JansenKZ25}. This bound is derived by the Exponential Time Hypothesis. The exponential time hypothesis, formulated by Impagliazzo and Paturi~\cite{ImpagliazzoP01}, is a computational hardness assumption. It states that the NP-complete problem 3-SAT cannot be solved in time $2^{o(\ell)}\text{poly}(|I|)$, where $\ell$ is the number of variables in the 3-SAT formulation. Under this assumption, we can further determine similar running time lower bounds for problems that 3-SAT can be reduced to. One such problem is scheduling on identical machines, i.e., $P||\Cmax.$ For this problem, Chen, Jansen and Zhang~\cite[Theorem 2]{ChenJZ18} show that, for any $\delta>0,$ there is no $2^{O(n^{1-\delta})}$ time algorithm solving $P||\Cmax.$ We note that their reduction fills each machine to exactly the optimal makespan. Thus, if we can not solve $P||\Cmax$ optimally, we can also not solve the corresponding $P||\Cmin$ instance with the same optimal value. This is because for some machine to have a completion time larger than the optimal value, there must also be one machine with a completion time lower than the optimal value. Jansen, Kahler and Zwanger~\cite{JansenKZ25} further analyzed the parameters involved in the reduction. Through this, they show that this yields a lower bound of $\pmax^{O(d^{1-\delta})}$ for $P||\Cmax,$ and by extension also for $P||\Cmin.$ As $\QCmax$ and $\QCmin$ are generalizations of $P||\Cmax$ and $P||\Cmin$ respectively, these hardness results extend to the uniform machine setting. Thus, an algorithm solving $\QCmax$ or $\QCmin$ in time $\pmax^{O(d^{1-\delta})}$ contradicts the ETH.

\subsection{Closest String Problem}\label{sec:closest string}
We define column-types as follows: Consider two column-vectors $c$ and $c'$ of length $k$. Now, $c$ and $c'$ are of the same column-type if for all $1 \le i < j \le k$, we have $c_i = c_j$ if and only if $c_i' = c_j'$. Let $T \in \ZZ_{\ge1}$ be the total number of column types.
Then we can use the following ILP formulation by Gramm \etal~\cite{GNR03}:
\begin{equation*}
\begin{array}{rcll}
    \sum_{e \in [k]} \sum_{f \in T} d_H(e,f_j) x_{f,e} &\le& d &\forall j \in [k]\\
    \sum_{e\in [k]} x_{f,e} &=& b^f &\forall f \in [T]\\
    x_{f,e} &\in& \ZZgeqzero &\forall (f,e) \in [T] \times [k]
\end{array}
\end{equation*}
Note that after this is a combinatorial \nfold ILP \eqref{eq:nfold} with $T$ blocks and inequalities in the global constraints. The blocks in the global constraints have $k$ rows and $k$ columns and since all entries in the matrix are binary it holds that $\Delta \le 1$.
To apply the algorithm presented in \cref{sec:algorithm}, we need to transform the inequalities into equations. We can do that by adding a slack block $A_{T+1} := \begin{pmatrix}
    I_k & \nullvec
\end{pmatrix}.$ The $\nullvec$ adds a slack variable in the corresponding local constraint, where we set the RHS to $dk$.
Now, applying our algorithm yields a running time of $((T+1) k)^{O(k)} \log(\min(d,\max(\max_{f\in[T]}b^f,dk))) \le ((T+1) k)^{O(k)} \log(L)$ which results in the following corollary.
\closestStringCor*
In the worst case this matches the currently best result by Knop et al.~\cite{KnopKM20} with $k^{O(k^2)} O(\log(L))$. However, in the setting, when the number of column-types $T$ is bounded by $k^{O(1)}$, we are able to reduce the quadratic dependency on $k$ in the exponent to a linear one. Also, in many other application settings, when the bounds on the number of column-types are for instance $k^{\log k}$ or $k^{k^{0.5}}$, we achieve an improvement in the running time.

\subsection{Imbalance Problem}\label{sec:imbalance}
We now parameterize \imbalance by the size $k$ of a vertex cover (VC) $C = \{c_1,\dots, c_k\}$. Observe that the VC can be determined in FPT time~\cite{CyganFKLMPPS15}.
We use the approach of Fellows \etal~\cite{FellowsLMRS08} to formulate this problem as an ILP. After that, we explain how this ILP can be slightly modified such that it has the combinatorial \nfold structure \eqref{eq:nfold} and apply our algorithm.

First, we loop over all $k!$ orderings $\pi_C$ of the VC. W.l.o.g.\ we can assume $\pi_C(c_1) < \pi_C(c_2) < \dots < \pi_C(c_k)$. Since $C$ is a VC, the remaining vertices $I := V\setminus C$ form an independent set. Now, we aim to optimally insert the vertices of $I$ into the ordering of $C$. Note that there are $k+1$ slots where a vertex $v\in I$ might be inserted: either to the left of all $c\in C$ which we call slot 0, i.e., $\pi(v)<\pi(c_1)$; in between two vertices $c_i,c_{i+1}\in C$ which we call slot $i\in[k-1]$, i.e., $\pi(c_i)<\pi(v)<\pi(c_{i+1})$; or to the right of all $c\in C$ which we call slot $k$, i.e., $\pi(c_k)<\pi(v)$.
Now, define the \textit{type} of a vertex $v \in I$ as its neighborhood $N(v)\subseteq C$ and let $T$ be the number of occurring types. Observe that we may have up to $2^k$ many vertex-types. 

As $I$ is an independent set, the inner order of those vertices at a certain slot does not matter. Therefore, the main problem is now to decide how many vertices of each type have to be inserted at which slot to achieve optimality w.r.t.\ $\pi_C$. Define $e_i := |N(c_i) \cap \{c_1,\dots,c_{i-1}\}|-|N(c_i) \cap \{c_{i+1},\dots,c_k\}|$ and let $|I(S)|$ be the number of vertices of type $S$.
Now we can formulate this as the following ILP where $x_S^i$ is number of vertices of type $S$ at slot $i$, $z_S^i$ is the imbalance of a vertex of type $S$ if placed at slot $i$ and $y_i$ is the total imbalance of $c_i$.
\begin{align}
    &\min \sum_{i=1}^k y_i + \sum_{i=0}^k\sum_{S\in[T]} z_S^i s_S^i &\\
    \text{s.t.} & -y_i + \sum_{\substack{S\in[T] \\ c_i\in S}} \Big(\sum_{j=0}^{i-1}x_S^j - \sum_{j=i}^k x_S^j\Big) \le -e_i &\forall i\in[k]\\
    & -y_i - \sum_{\substack{S\in[T] \\ c_i\in S}} \Big(\sum_{j=0}^{i-1}x_S^j - \sum_{j=i}^k x_S^j\Big) \le e_i &\forall i\in[k]\\
    & \sum_{i=0}^k x_S^i = |I(S)| &\forall S\in[T]\\
    &x_S^i \in \ZZgeqzero &\forall i\in[k]_0, S\in[T]\\
    &y_i \in \ZZgeqzero &\forall i\in[k]
\end{align}
For more details for the construction we refer to~\cite{FellowsLMRS08} and~\cite{CyganFKLMPPS15}.

This ILP can be transformed into a combinatorial \nfold \eqref{eq:nfold}. First, we add a local constraint for the variables $y_i$. We know that each $y_i$ is upper bounded by $n$, thus the sum is bounded by $k(n-1)$. We add a slack variable $y_{k+1}$ and add the constraint $\onevec_{k+1}^T y = k(n-1)$.
Similar to the slack block in the \nfold ILP of the \closeststring problem, we also add a slack block (and a corresponding local constraint) in this ILP to turn the $2k$ inequalities into equations. The RHS of the additional local constraint is $2k(n-1)$.
In total, this results in the combinatorial \nfold ILP \eqref{eq:nfold} with $(T+1)(k+1)+2k+1$ variables and $T+2$ blocks, each (except for the slack block) having $k+1$ columns and $2k$ rows. The slack block has $2k+1$ columns and $2k$ rows. Note that the largest absolute value $\Delta$ in the constraints is $1$.
Now, applying the algorithm proposed in \cref{sec:reduction}, we can solve this \nfold in time $((T+2)k)^{O(k)} O(\log(kn))$. Adding the $k! = O(k^k)$-overhead for guessing the ordering of the VC, we achieve the following result.
\imbalanceProb*
In the worst case, the number of vertex-types is $2^k,$ which results in a parameter dependency of $2^{O(k^2)}$. Applying the algorithm by Knop \etal~\cite{KnopKM20} to this problem achieves a running time of $k^{O(k^2)}O(T^3 \langle I \rangle) + \mathcal{R}$, where $\langle I \rangle$ is the input size and $\mathcal{R}$ is the time required for one call to an optimization oracle (see \cite[Theorem 3]{KnopKM20} for details). Therefore, we achieve an improvement in the parameter dependency when solving \imbalance parameterized by VC. Also note that the dependency in the exponent is reduced to linear when the $T$ is bounded by $k^{O(1)}$. Other application settings, bound $T$ by $k^{\log k}$ or $k^{k^{0.5}}$ in these cases, we also achieve an improvement in the running time.

\end{document}